\def\ps@pprintTitle{%
  \let\@oddhead\@empty
  \let\@evenhead\@empty
 \def\@oddfoot{\reset@font\hfil\thepage\hfil}
  \let\@evenfoot\@oddfoot
}
\newtheorem{theorem}{Theorem}[section]
\newtheorem{proposition}[theorem]{Proposition}
\newcommand{\ee}{\mathrm{e}}
\newcommand{\esp}{\mathbb{E}}
\newcommand{\espQ}{\mathbb{\tilde{E}}}
\newcommand{\corr}{\mathrm{corr}}
\newcommand{\sgn}{\operatorname{sgn}}
\newcommand{\mgf}{\operatorname{mgf}}
\newcommand{\numAssets}{\mathit{\Upsilon}}
\newcommand{\numWienerProcess}{\mathit{n_w}}
\newcommand{\ProbP}{\mathbb{P}}
\newcommand{\returnP}{\mathrm{\alpha}}
\newcommand{\EJumpP}{\mathrm{\beta}}
\newcommand{\JIntensityP}{\mathrm{\lambda}}
\newcommand{\CompoundPoissonP}{\mathrm{Q}}
\newcommand{\PoissonP}{\mathit{N}}
\newcommand{\WienerP}{\mathit{W}}
\newcommand{\JumpP}{\mathit{Y}}
\newcommand{\PDFJumpP}{\mathit{f^{(i)}(y)}}
\newcommand{\coefficientWienerP}[1]{\mathrm{\gamma_{\mathit{ij}}^{#1}}}
\newcommand{\ProbQ}{\tilde{\mathbb{P}}}
\newcommand{\EJumpQ}{\mathrm{\tilde{\beta}}}
\newcommand{\JIntensityQ}{\mathrm{\tilde{\lambda}}}
\newcommand{\PDFJumpQ}{\mathit{\tilde{f}^{(i)}(y)}}
\newcommand{\logJumpMeanQ}{\mathrm{\eta}}
\newcommand{\logJumpVolatilityQ}{\mathrm{\upsilon}}
\newcommand{\WienerQ}{\mathit{\tilde{W}}}
\newcommand{\shift}{\mathrm{\delta}}
\newcommand{\sign}{\mathit{b}}
\newcommand{\functionF}{\mathit{g}}
\newcommand{\ShiftedBasket}{\mathit{B}}
\newcommand{\ShiftedStrike}{\mathit{K}}
\newcommand{\Basket}{\mathit{B}^*}
\newcommand{\Strike}{\mathit{K^*}}
\newcommand{\lowerLimit}{\mathit{l_1}}
\newcommand{\upperLimit}{\mathit{l_2}}
\newcommand{\parameterA}{\mathit{h_1}}
\newcommand{\parameterB}{\mathit{h_2}}
\newcommand{\zetaTilde}{\tilde{z}}
\newcommand{\coeffHermite}[1]{\varphi_{#1}}
\newcommand{\stocProcess}[1]{\left\{#1\right\}_{t\geq 0}}
\numberwithin{equation}{section}
\newcommand{\COne}{C1} %number of best solution
\newcommand{\CTwo}{C2} %number of times a method is not able to price
\newcommand{\CThree}{C3} %RMSE
\newcommand{\CFour}{C4} %average volatility of delta
\newcommand{\CFive}{C5} %RMSE of hedged portfolio
\newcommand{\CSix}{C6} %percentage of subhedging
\newcommand{\CSeven}{C7} %persentage of super-hedging
\newcommand{\CEight}{C8} %average error of sub hedging
\newcommand{\CNine}{C9} %average error of super hedging
\newcommand{\CTen}{C10}%average error among all the simulations
\newcommand{\setOption}{\mathit{O}}
\newcommand\tstrut{\rule{0pt}{3ex}}
\begin{document}

\begin{frontmatter}
\title{Pricing and Hedging Basket Options with Exact Moment Matching}

\author[kent]{Tommaso Paletta\corref{cor1}}
%\ead[url]{\footnotesize{http://www.kent.ac.uk/kbs/research/student-research/profiles/paletta-tommaso.html}}
 \ead[]{t.paletta@kent.ac.uk}
\author[unical]{Arturo Leccadito}
\author[kent]{Radu Tunaru}
% \ead{r.tunaru@kent.ac.uk}
\address[kent]{Business School, University of Kent, Park Wood Road, Canterbury CT2 7PE, UK,}
\address[unical]{Dipartimento di Economia, Statistica e Finanza, Universit\`a della Calabria, Ponte Bucci cubo 3C, Rende (CS), 87030, Italy }
\cortext[cor1]{Corresponding author:}

\begin{abstract}
Theoretical models applied to option pricing should take into account the empirical characteristics of the underlying financial time series.
In this paper, we show how to price basket options when assets follow a shifted log-normal process with jumps capable of accommodating negative skewness. Our technique is based on the Hermite polynomial expansion that can match exactly the first $m$ moments of the model implied-probability distribution. This method is shown to provide superior results for basket options not only with respect to pricing but also for hedging.
\end{abstract}

\begin{keyword}
Basket options \sep Shifted log-normal jump process \sep Hermite polynomials \sep Negative skewness \sep Option pricing and hedging \\\vspace{0.2cm}
\textit{JEL}: C18 \sep C63 \sep G13 \sep G19
\\\vspace{0.3cm}
Submitted online: \today
\end{keyword}

\end{frontmatter}\newpage

\section{Introduction}
Basket options are contingent claims on a group of assets such as equities, commodities, currencies and even other vanilla derivatives. Spread options can be conceptualised as basket options whose payoffs depend on the price differential of two assets.
Basket options are a subclass of exotic options commonly traded over-the-counter in order to hedge away exposure to correlation or contagion risk. Hedge-funds also use them for investment purposes, to combine diversification with leveraging. Spread options are heavily traded on the commodity markets, in particular on energy markets, where several final products are industrially produced from the same raw material.

From a modelling point of view, the  framework  ought to be  multidimensional since baskets of 15 to 30 assets are frequently traded. Many pricing models that seem to work well for single assets cannot be easily expanded to a multidimensional set-up, mainly due to computational difficulties. Hence, in order to circumvent these difficulties, practitioners resort to classic multidimensional geometric Brownian motion type models which can be easily implemented. However, by doing so, the empirical characteristics of the assets in the basket are simply ignored. In particular, negative skewness, which is well known to characterize equities, cannot be captured properly by these simple models which can produce a limited  range of values for skewness. Recently, \cite{BorovkovaPermana2007} and \cite{BorovkovaPermanaWeide2007,BorovkovaPermanaWeide2012}
have proposed a new methodology that can incorporate negative skewness while still retaining analytical tractability, under a shifted log-normal distribution, by considering the entire basket as one single asset.  This strong assumption allowes the derivation of closed-form formulae for option pricing.

Ideally, one would like the best of both worlds, realistic modelling and precise calculations.
In this paper, we present a general computational solution to the problem of multidimensional models which lack closed-form formulae or models that require burdensome numerical procedures.  The shifted log-normal process with jumps exemplifies the problem encountered with pricing basket options. On one hand, this distribution is very useful to follow the dynamics of one asset, but on the other hand expanding this set-up to a basket of assets leads to severe computational problems. We circumvent this problem by employing the Hermite polynomial expansion which is matching exactly the first $m$ moments of the model implied probability distribution. Hence, the only prerequisite of our method is to be able to calculate the moments of the basket in closed form. In addition, the same technique can be applied for any other similar modelling situations for other models. Furthermore,  our methodology is applicable to the situation when some assets in the basket follow one diffusion model and other assets follow a different diffusion model.

The article is structured as follows. In Section~\ref{sec:litrev}, we briefly review the methods proposed for pricing basket and spread options, focusing on approximation techniques. Section~\ref{sec:modfra} contains a description of the continuous-time models we employed here. Our new methodology is discussed in Section~\ref{sec:pricinghedging} and the empirical results are presented in Section~\ref{sec:empres}. The final section concludes.

\section{Related Literature}
\label{sec:litrev}
The number of papers covering basket options and, in particular, spread options has increased considerably in the last three decades.
\cite{Margrabe1978} was the first to develop an exact formula for European spread options when the two assets are assumed to follow a geometric Brownian motion.
\cite{CarmonaDurrleman2003} presented an extensive literature review on pricing methods for spread options as well as introducing a new method.
The methods used to price basket options can be classified into analytical, purely numerical and a hybrid semi-analytical  class based on various expansions and moment matching techniques. Our method belongs to the last category.

By analogy to early papers on pricing Asian options, \cite{Gentle1993} proposed pricing basket options\footnote{In that paper it is assumed that all assets in the basket have positive weights.} by approximating the arithmetic weighted average with its geometrical-average counterpart so that a Black-Scholes type formula could be applied. \cite{KornZeytun2013} improved this approximation using the fact that, if the spot prices of assets in the basket are shifted by a large scalar constant $C$, their arithmetic and geometric means converge asymptotically. They consider log-normally distributed assets and approximate the $C$-shifted distribution by standard log-normal distributions.
\cite{Kirk1995} developed a technique for pricing a spread option by coupling the asset with negative weight with  the strike price, considering their combination as one asset having a shifted distribution and then employing the \cite{Margrabe1978} formula for exchanging two assets. This shift assumption corresponds to a linear approximation of the exercise boundary\footnote{The exercise boundary is the minimal standardized log-price of the first asset that makes the spread option in-the-money as a function of the standardized log-price of the second asset.}. The method in \cite{DengEtAl2008} can be considered as an extension of \cite{Kirk1995}. They derived a closed-form pricing formula for spread options  by applying a quadratic Taylor expansion of the exercise boundary. These results were further extended by \cite{LiEtAl2010} to the  case of $N$ assets with positive and negative  weights. %$N+1$ assets, with one positive and $N$ negative weights.

\cite{VentramananAlexander2011} and \cite{AlexanderVentramanan2012} priced spread options  and more general multi-asset options (basket and rainbow options) using a portfolio of compound exchange options (CEO). Their idea was to utilise exact replicating portfolios and then approximate the formulae to price the CEOs. Remarkably, \cite{VentramananAlexander2011} derived an analytical formula for American spread options using the early exercise premium approach proposed in \cite{Kim1990}. \cite{BjerksundStensland2011} also priced spread options by direct use of the implied exercise boundary in \cite{Kirk1995}.

When analytical formulae are difficult to find under a particular model, it is common, in the finance industry, to resort to  Monte Carlo (MC) methods. Control variate techniques for pricing basket options are described in \cite{Pellizzari1998} and \cite{KornZeytun2013}. \cite{Barraquand1995} advanced a very general framework to price multidimensional contingent claims by Monte Carlo simulation and quadratic re-sampling. Monte Carlo simulation was also successfully used  to price American style basket options by \cite{Barraquand1995StatePartitioning}, \cite{LongstaffSchwartz2001} and \cite{broadie2004stochastic}.

While Monte Carlo methods offer a feasible solution, the computational cost may be too high even for standard-size baskets commonly traded on the financial markets. Hence, the bulk of the literature on basket option pricing gravitates around approximation methods that circumvent the numerical problems generated by the high-dimensionality of basket models. A typical example is the research by \cite{Li2000} who employed an Edgeworth expansion of a four-parameter skewed generalized-t distribution. Edgeworth series expansions were proposed first by \cite{JarrowRudd1982} and \cite{TurnbullWakeman1991}  to price European basket options and arithmetic Asian options  respectively. \cite{Rubinstein1998} combined an Edgeworth expansion and a binomial tree to price American-style option with pre-specified skewness and kurtosis. This method has two disadvantages. Firstly, the matching of skewness and kurtosis is not exact given that a rescaling of probability  is necessary. Secondly, not all combinations of skewness and kurtosis can be matched because negative probabilities and multi-modal distribution may result.

\cite{Levy1992} approximated the distribution of a basket by matching its first two moments with the moments of a log-normal density function, and consequently  a Black-Scholes pricing formula could be employed.
Other  works improved the log-normal approximation allowing for improved skewness and kurtosis calibration. The displaced diffusion introduced by \cite{Rubinstein1983} considers the shifted basket value as being log-normally distributed. \cite{BorovkovaPermanaWeide2007}, henceforth \emph{BPW}, proposed a generalized log-normal approach that is superior to the model in  \cite{Rubinstein1983} because it allows distributions of a basket to cover  negative values and  negative skewness. \cite{ZhouWang2008} advocated a method similar to that of $BPW$, selecting the log-extended-skew-normal as the approximating distribution. They obtained a Black-Scholes type pricing formula where the standard extended-skew-normal cumulative distribution function replaces the normal one. \cite{BorovkovaPermanaWeide2012} extended this method to price American-style basket options via  a one-dimensional binomial tree.
In an interesting application, \cite{BorovkovaPermana2007} adapted  the method described in  $BPW$ to price Asian basket options.

\cite{MilevskyPosner1998} used the reciprocal gamma distribution to approximate a positively weighted sum of correlated log-normal random variables. Matching the first two moments of the basket, they priced European basket options by a Black-Scholes type formula where the normal cumulative function is substituted by the cumulative distribution function of the gamma distribution. This method returns good results only when the basket has a decaying correlation structure (similar to the one for Asian option). \cite{MilevskyPosner1998b} derived a closed-form pricing formula by using two distributions from the Johnson system of distributions that match the first four moments of the basket value. Asian and basket options prices were calculated by \cite{Ju2002} using the Taylor's expansion for the ratio of the characteristic function of the value of the basket at maturity to that of the approximating log-normal random variable. 
While the literature on pricing basket options is large there is sparse research on calculating the hedging parameters for basket options. \cite{HurdZhou2010} price spread options for two or more assets and also derive the Greek parameters by using fast Fourier transform. The only assumption for the underlying asset price processes is that the characteristic function of the joint return is known analytically.

\section{The Modeling Framework \label{sec:modfra}}
In this paper, we consider a new process for asset prices: the shifted jump-diffusion process. We firstly describe the standard jump-diffusion model in Section \ref{JumpDiffusion} that provides the platform for designing the shifted jump-diffusion model in Section \ref{shiftedModel}.

\subsection{Jump-diffusion Model}
\label{JumpDiffusion}
Consider the filtered probability space\footnote{The contents and notation in this subsection benefit from  \cite[chap. 11.5]{shreve2004stochastic}.} $(\Omega,\mathscr{F},(\mathscr{F}_t)_{0\leq t \leq T},\ProbP)$. Let us define, on this space, the financial market consisting of $\numAssets$ assets, $S^{(i)}$ for any $i=1,\cdots,\numAssets$, with dynamics given by
\begin{equation}
\label{assedDinamics}
d S_t^{(i)}=(\returnP_i-\EJumpP_i \JIntensityP_i)S_t^{(i)} dt + S_t^{(i)} \sum_{j=1}^{\numWienerProcess}\coefficientWienerP{} d \WienerP_t^{(j)}+S_{t^-}^{(i)} d\CompoundPoissonP_t^{(i)}, \quad i=1,\cdots, \numAssets
\end{equation}
and the bank account
\begin{equation}
\label{bankAccount}
dM_t = rM_t dt
\end{equation}
that can be used to borrow and deposit money with continuously compounded interest rate $r\geq 0$, assumed constant over time.

Equation~\eqref{assedDinamics} describes a jump-diffusion process where $\returnP_i$ is the expected rate of return on the asset $i$, $\stocProcess{\WienerP_t^{(j)}}$ are $\numWienerProcess$ mutually independent Wiener processes,  $\stocProcess{\CompoundPoissonP_t^{(i)}}$  are independent compound Poisson processes formed from some underlying Poisson processes $\stocProcess{\PoissonP_t^{(i)}}$ with intensity $\JIntensityP_i\geq 0$ and $\JumpP_j^{(i)}$ representing the jump amplitude  of the $j$-th jump  of $\PoissonP_t^{(i)}$ for any $i=1,\cdots,\numAssets$.  The jumps $\JumpP_j^{(i)}$ for any $i=1,\cdots,\numAssets$ are independent and identically distributed random variables with probability density function $\PDFJumpP:\Re^{+}\rightarrow[0,1]$ and expected value under the physical measure $\EJumpP_i=\esp[\JumpP^{(i)}]=\int_{\Re}y \PDFJumpP dy$. %The $Q_t^{(i)}$ are correlated with covariance $\epsilon_{i,j}$.
Moreover, jumps for different assets are independent.

Applying standard Ito's rule for jump processes \cite[see][Chap. 11.7.2]{shreve2004stochastic}, it is possible to derive a closed-form solution for the SDEs in \eqref{assedDinamics} as:
\begin{equation}
\label{assetSolution}
S_t^{(i)}=S_0^{(i)}e^{\left(\returnP_i-\EJumpP_i \JIntensityP_i-\frac{1}{2}\sum_{j=1}^{\numWienerProcess}\coefficientWienerP{2} \right)t+\sum_{j=1}^{\numWienerProcess}\coefficientWienerP{} \WienerP_t^{(j)}}  \prod_{l=1}^{\PoissonP_l^{(i)}}{(\JumpP_j^{(i)}+1)}, \quad i=1,\cdots, \numAssets.
\end{equation}

The market given by \eqref{assedDinamics} and \eqref{bankAccount} is arbitrage free if and only if there exists $\bm{\theta}=[\theta_1,\cdots,\theta_\numWienerProcess]$, $\tilde{\bm{\beta}}=[\tilde{\beta}_1,\cdots, \tilde{\beta}_\numAssets]$ and $\tilde{\bm{\lambda}}=[\tilde{\lambda}_1,\cdots, \tilde{\lambda}_\numAssets]$ solving the system of market price of risk equations
\begin{equation}
\label{systemArbitrageFree}
\returnP_i-\EJumpP_i \JIntensityP_i -r=\sum_{j=1}^{\numWienerProcess}\coefficientWienerP{} \theta_j- \EJumpQ_i \JIntensityQ_i, \quad i=1,\cdots, \numAssets.
\end{equation}
The solution to \eqref{systemArbitrageFree} is, in general, not unique. Nevertheless, we assume that one solution of the system~\eqref{systemArbitrageFree} is selected\footnote{There is a large literature devoted to the issue of selecting a pricing measure. For a review, see \cite{Frittelli2000} and references within.} and a pricing measure $\ProbQ$ is fixed\footnote{Henceforth, $\esp$ and $\espQ$ are used to indicate the expectation operators under the physical measure $\ProbP$ and under the risk-neutral measure $\ProbQ$, respectively.}. Under the $\ProbQ$-measure, for asset $i$-th in the basket, we still have the compound Poisson processes $\stocProcess{\CompoundPoissonP_t^{(i)}}$, the underlying Poisson process $\stocProcess{\PoissonP_t^{(i)}}$ and the jumps $\JumpP_j^{(i)}$ but now the intensity of the Poisson process $\stocProcess{\PoissonP_t^{(i)}}$ is $\JIntensityQ_i$ and $\EJumpQ_i=\espQ[\JumpP^{(i)}]=\int_{\Re}y \PDFJumpQ dy$. One way to model the size of the jumps is taking, for each asset, jumps iid log-normally distributed\footnote{When we impose a log-normal distribution for $\JumpP^{(i)}_j+1$, we implicitly assume that the system of equations in \eqref{systemArbitrageFree} has a solution. Furthermore, any other distribution $\PDFJumpQ:\Re^{+}\rightarrow[0,1]$ could have been chosen, if it  leads to a feasible system.} such that $\espQ[\log(\JumpP_j^{(i)}+1)]=\logJumpMeanQ_i$ and $\widetilde{Var}[\log(\JumpP_j^{(i)}+1)]=\logJumpVolatilityQ_i^2$.

The risk-neutral $\ProbQ$-dynamics of the assets composing the basket can be described as:
\begin{equation}
\label{assedDinamicsQ}
d S_t^{(i)}=(r-\EJumpQ_i \JIntensityQ_i)S_t^{(i)} dt + S_t^{(i)} \sum_{j=1}^{\numWienerProcess}\coefficientWienerP{} d \WienerQ_t^{(j)}+S_{t^-}^{(i)} d\CompoundPoissonP_t^{(i)}, \quad i=1,\cdots, \numAssets
\end{equation}
where $\stocProcess{\WienerQ_t^{(i)}}$ are independent  Wiener processes under the martingale measure $\ProbQ$.

The solutions to \eqref{assedDinamicsQ} can be derived in the following convenient closed-form:
\begin{equation}
\label{solutionArbitrageFree}
S_t^{(i)}=S_0^{(i)}e^{\left(r-\EJumpQ_i \JIntensityQ_i-\frac{1}{2}\sum_{j=1}^{\numWienerProcess}\coefficientWienerP{2}\right)t+\sum_{j=1}^{\numWienerProcess}\coefficientWienerP{} \WienerQ_t^{(j)}}  \prod_{l=1}^{\PoissonP_t^{(i)}}{(\JumpP_l^{(i)}+1)},\quad i=1,\cdots,\numAssets.
\end{equation}

\subsection{Shifted jump-diffusion Model}
\label{shiftedModel}
From a modelling point of view, it would be more appropriate to use models that are capable of generating negative skewness reflecting the empirical evidence in equity markets. One such flexible model is the generalized GBM process in  \cite{BorovkovaPermanaWeide2007}. Here, we extend that model to include jumps, thus obtaining a jump-diffusion process for the displaced or shifted asset value:
\small
\begin{eqnarray}
\label{assedDinamicsShifted}
d \left(\sign_i S_t^{(i)}-\shift_t^{(i)}\right)&=&(\returnP_i-\EJumpP_i \JIntensityP_i)\left(\sign_i S_t^{(i)} -\shift_t^{(i)}\right)dt + \left(\sign_i S_t^{(i)}-\shift_t^{(i)}\right) \sum_{j=1}^{\numWienerProcess}\coefficientWienerP{} d \WienerP_t^{(j)}+\nonumber\\
&+&\left(\sign_i  S_{t^-}^{(i)}-\shift_t^{(i)}\right) d\CompoundPoissonP_t^{(i)},\quad i=1,\cdots, \numAssets.
\end{eqnarray}
In~\eqref{assedDinamicsShifted},  $\shift_t^{(i)}$ is the shift applied to $S_t^{(i)}$ at time $t$ and $\sign_i \in\{-1,1\}$. We assume that  $\sign_i$ is negative when the asset price assumes values in $(-\infty,-\shift_t^{(i)})$ and positive when the range for the asset price is $(\shift_t^{(i)},\infty)$. The shift $\shift_t^{(i)}$ is assumed to follow equation $d\shift_t^{(i)}=r\shift_t^{(i)}dt$, with $\shift_0^{(i)}\in \Re$ and, consequently, represents the cash position at time 0.  All the other parameters  have the same meaning as described above for equation~\eqref{assedDinamics} only that they refer now to the shifted asset prices.

The solution of equation~\eqref{assedDinamicsShifted}, under the risk-neutral pricing measure $\ProbQ$, is clarified in the following proposition\footnote{A more general version of Proposition \ref{propositionShift} is stated in the Proposition~\ref{generalprop}  for the sake of completeness, but it is not used empirically in this paper.}.

\begin{proposition}
\label{propositionShift}
Consider that the assets in a basket follow the shifted jump-diffusion model with dynamics given by the SDE \eqref{assedDinamicsShifted}
with the shifting process $\stocProcess{\shift_t^{(i)}}$
satisfying $d\shift_t^{(i)} = r\shift_t^{(i)} dt$. If a solution $(\bm{\theta},\tilde{\bm{\beta}},\tilde{\bm{\lambda}})$
of the system \begin{equation}
\label{systemArbitrageFreeProposition}
\returnP_i-\EJumpP_i \JIntensityP_i -r=\sum_{j=1}^{\numWienerProcess}\coefficientWienerP{} \theta_j- \EJumpQ_i \JIntensityQ_i, \quad i=1,\cdots, \numAssets
\end{equation}
does exist and is selected in association with the risk-neutral pricing measure
$\ProbQ$, then, under this risk-neutral measure,
\begin{equation}
\label{assetSolutionShiftedFirst}
S_t^{(i)}=\left(S_0^{(i)}-\sign_i \shift_0^{(i)}\right)e^{\left(r-\EJumpQ_i \JIntensityQ_i-\frac{1}{2}\sum_{j=1}^{\numWienerProcess}\coefficientWienerP{2}\right)t+\sum_{j=1}^{\numWienerProcess}\coefficientWienerP{} \WienerQ_t^{(j)}}  \prod_{l=1}^{\PoissonP_t^{(i)}}{(\JumpP_l^{(i)}+1)}+\sign_i\shift_0^{(i)} e^{r t}.
\end{equation}

\end{proposition}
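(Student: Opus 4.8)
The plan is to reduce the shifted SDE \eqref{assedDinamicsShifted} to the ordinary jump-diffusion SDE already solved in Section~\ref{JumpDiffusion} by a deterministic change of variable, solve the latter with the representation \eqref{solutionArbitrageFree} that is available under the selected pricing measure, and then invert the change of variable. Because the displacement $\shift_t^{(i)}$ is driven only by the linear ODE $d\shift_t^{(i)}=r\shift_t^{(i)}dt$, nothing stochastic is added or removed by this substitution, so the reduction is exact.

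Concretely, I would proceed as follows. First, for each $i$ introduce the auxiliary process $Z_t^{(i)} := \sign_i S_t^{(i)}-\shift_t^{(i)}$. Since $\stocProcess{\shift_t^{(i)}}$ is continuous we have $\shift_{t^-}^{(i)}=\shift_t^{(i)}$, hence $\sign_i S_{t^-}^{(i)}-\shift_t^{(i)}=Z_{t^-}^{(i)}$, and \eqref{assedDinamicsShifted} becomes
\[
dZ_t^{(i)}=(\returnP_i-\EJumpP_i\JIntensityP_i)Z_t^{(i)}\,dt+Z_t^{(i)}\sum_{j=1}^{\numWienerProcess}\coefficientWienerP{}\,d\WienerP_t^{(j)}+Z_{t^-}^{(i)}\,d\CompoundPoissonP_t^{(i)},
\]
which is exactly the plain dynamics \eqref{assedDinamics} with $S_t^{(i)}$ replaced by $Z_t^{(i)}$ and with the same drift, diffusion and jump coefficients; in particular the attached market-price-of-risk equations are \eqref{systemArbitrageFreeProposition}, which is identical to \eqref{systemArbitrageFree}. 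Second, since the hypothesis provides a solution $(\bm{\theta},\tilde{\bm{\beta}},\tilde{\bm{\lambda}})$ of that system and $\ProbQ$ is the measure it selects, the $\ProbQ$-dynamics of $Z_t^{(i)}$ are those of \eqref{assedDinamicsQ}, so the closed form \eqref{solutionArbitrageFree} applies verbatim to $Z_t^{(i)}$:
\[
Z_t^{(i)}=Z_0^{(i)}\,e^{\left(r-\EJumpQ_i\JIntensityQ_i-\frac12\sum_{j=1}^{\numWienerProcess}\coefficientWienerP{2}\right)t+\sum_{j=1}^{\numWienerProcess}\coefficientWienerP{}\WienerQ_t^{(j)}}\prod_{l=1}^{\PoissonP_t^{(i)}}\bigl(\JumpP_l^{(i)}+1\bigr).
\]
Third, solving $d\shift_t^{(i)}=r\shift_t^{(i)}dt$ gives $\shift_t^{(i)}=\shift_0^{(i)}e^{rt}$. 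Finally I would invert the substitution: using $\sign_i\in\{-1,1\}$, so $\sign_i^{-1}=\sign_i$, one has $Z_0^{(i)}=\sign_i\bigl(S_0^{(i)}-\sign_i\shift_0^{(i)}\bigr)$ and $S_t^{(i)}=\sign_i\bigl(Z_t^{(i)}+\shift_t^{(i)}\bigr)$; substituting the expression for $Z_t^{(i)}$ and collecting the factor $\sign_i^2=1$ in front of the exponential-product term yields \eqref{assetSolutionShiftedFirst}.

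Since the whole argument is a change of variable followed by algebra, the only genuinely delicate point is the justification of the second step: one must argue that $Z_t^{(i)}$, as constructed, is a semimartingale solving the jump-diffusion SDE in the Itô-for-jumps sense and invoke uniqueness of that solution, so that the explicit formula derived earlier for \eqref{assedDinamicsQ} may legitimately be transported to $Z_t^{(i)}$; everything else is bookkeeping. The sign conventions on $\sign_i$ imposed after \eqref{assedDinamicsShifted} are precisely what make $Z_0^{(i)}>0$, hence (since $\JumpP_l^{(i)}+1>0$) $Z_t^{(i)}>0$ for all $t$, so that \eqref{assetSolutionShiftedFirst} is genuinely of shifted-log-normal-with-jumps type; those conventions are not needed for the derivation itself.
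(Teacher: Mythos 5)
Your argument is correct, but it is a genuinely different route from the one the paper takes. The paper's proof does not derive the formula from the SDE at all: it defines $\Gamma(t)=e^{\left(r-\EJumpQ_i \JIntensityQ_i-\frac{1}{2}\sum_{j}\coefficientWienerP{2}\right)t+\sum_{j}\coefficientWienerP{} \WienerQ_t^{(j)}}\prod_{l=1}^{\PoissonP_t^{(i)}}(\JumpP_l^{(i)}+1)$, notes $\espQ[\Gamma(t)]=\ee^{rt}$, splits the right-hand side of \eqref{assetSolutionShiftedFirst} into $S_0^{(i)}\Gamma(t)$ plus a shift term whose discounted expectation vanishes, and concludes from $\espQ[\ee^{-rt}S_t^{(i)}]=S_0^{(i)}$. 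That is only a verification that the candidate formula is consistent with the martingale property of discounted prices under $\ProbQ$ --- a first-moment check, not a proof that the expression solves \eqref{assedDinamicsShifted}. Your substitution $Z_t^{(i)}=\sign_i S_t^{(i)}-\shift_t^{(i)}$, which reduces the shifted SDE exactly to \eqref{assedDinamics} (using continuity of $\shift^{(i)}$ so that $\shift_{t^-}^{(i)}=\shift_t^{(i)}$), lets you import the closed form \eqref{solutionArbitrageFree} for $Z^{(i)}$ and then invert the affine map; the bookkeeping with $\sign_i^2=1$ is right and yields \eqref{assetSolutionShiftedFirst}. What your route buys is an actual derivation of the solution (modulo the uniqueness/semimartingale point you correctly flag, which is covered by the same Ito-for-jumps argument the paper invokes for \eqref{assetSolution}); what the paper's route buys is brevity and the observation --- made explicit in Proposition \ref{generalprop} --- that only the property $\espQ[\ee^{-rt}\shift_t^{(i)}]=\shift_0^{(i)}$ is needed for the pricing-relevant conclusion, whereas your reduction uses the stronger fact that $\shift^{(i)}$ is the deterministic solution of $d\shift_t^{(i)}=r\shift_t^{(i)}dt$ and would need modification for a stochastic adapted shift.
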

\begin{proof}
See Appendix~\ref{AppendixShif}
\end{proof}

In order to simplify the notation for the empirical work carried out in Section \ref{comparison}, we denote $V_t^{(i)}=\sum_{j=1}^{\numWienerProcess}{\frac{\coefficientWienerP{}}{\sigma_i} \WienerQ_t^{(j)}}$ where $\sigma_i^2=\sum_{j=1}^{\numWienerProcess}{\coefficientWienerP{2}}$. Thus  $\stocProcess{V_t^{(i)}}$ are dependent standard Brownian motions with \[\rho_{l_1l_2}=\corr(V_t^{(l_1)},V_t^{(l_2)})=\frac{1}{\sigma_{l_1}\sigma_{l_2}}\sum_{j=1}^{\numWienerProcess}{\gamma_{\mathit{l_1j}}\gamma_{\mathit{l_2j}}},\] and consequently
\begin{equation}
\label{assetSolutionShifted}
S_t^{(i)}=\left(S_0^{(i)}-\sign_i \shift_0^{(i)}\right)e^{\left(r-\EJumpQ_i \JIntensityQ_i-\frac{1}{2}\sigma_i^2\right)t+\sigma_i V_t^{(i)}}  \prod_{l=1}^{\PoissonP_t^{(i)}}{(\JumpP_l^{(i)}+1)}+\sign_i\shift_0^{(i)} e^{r t}
\end{equation}
    is used instead of \eqref{assetSolutionShiftedFirst}.

Finally, we point out that the shifted jump-diffusion may encompass three sub-cases:
\begin{itemize}
\item geometric Brownian motion (GBM) when $\shift_0^{(i)}=0$ and $\JIntensityQ_i=0$ for each asset $i$;
\item shifted GBM when $\JIntensityQ_i=0$ for each asset $i$;
\item standard jump-diffusion when $\shift_0^{(i)}=0$ for each asset $i$.
\end{itemize}

\section{Pricing and hedging methodology\label{sec:pricinghedging}}

Our aim is to price European basket options under the shifted jump-diffusion model. The payoff at maturity
of such option is $(\Basket_T-\Strike)^+$, driven by the underlying variable
\begin{equation}\label{basket}
\Basket_t =\sum_{i=1}^{\numAssets} a_i S_t^{(i)},
\end{equation}
where $\Strike$ is the strike price, $\bm{a}  =(a_1,\ldots,a_\numAssets)'$ is the vector of basket weights, which could be positive or negative, and $T$ is the time to maturity.

Under the majority of models applied in practice,  the probability density of the basket $\Basket_t$ cannot usually be obtained in closed-form. The methodology proposed here is circumventing this problem using a Hermite approximation probability density that will replace the risk-neutral density implied by the model~\eqref{assetSolutionShifted}. In addition, the approximation density derived in this paper is constructed in such a way to match exactly up to the first $m$ moments of the model implied risk-neutral density.

\cite{LeccaditoToscanoTunaru2012} proposed the Hermite tree method for pricing financial derivatives. In a nutshell, the idea  is to match the moments of the log-returns of the underlying asset with  the moments of a discrete random variable. This work elaborates on some variants of the method presented in \cite{LeccaditoToscanoTunaru2012} to deal with baskets that may take on negative values. In particular, the binomial distribution has been changed with the asymptotically equivalent Gaussian distribution (coded as \emph{G}) and the moment matching is done on two different types of return quantities (coded as \emph{A} and \emph{B}) as specified in Table \ref{momentMatching}, where $\ShiftedBasket_T$ is defined by equation \eqref{shiftedBasket}. Henceforth, $\ShiftedBasket_0$ is assumed to be different from  0.

\begin{center}
[Table \ref{momentMatching} about here.]
\end{center}

\subsection{Moments of the baskets}
The first step in our methodology is to derive the moments of the basket~\eqref{basket} under the specification of a  model for the underlying assets. For model \eqref{assetSolutionShifted},  consider the ``shifted basket''
\begin{eqnarray}
\label{shiftedBasket}
\ShiftedBasket_t&=&\sum_{i=1}^{\numAssets}{a_i \left(S_t^{(i)}- \sign_i \shift_0^{(i)}\ee^{rt} \right)}
\end{eqnarray}
and the ``shifted strike price''
\begin{eqnarray}
\label{shiftedStrikePrice}
\ShiftedStrike&=& \Strike-\sum_{i=1}^{\numAssets}{a_i \sign_i \shift_0^{(i)}\ee^{rt}}.
\end{eqnarray}
%The discounted value of the shifted basket is a martingale because it is a %weighted sum of martingales. \\
For practical purposes we shall calculate the moments of the shifted basket.  Proposition~\ref{prop:moments} shows how to calculate these moments.
\begin{proposition}\label{prop:moments}
The $k$-moment of $\ShiftedBasket_t$, under $\ProbQ$, is given by  %\textbf{(is the property $\sigma(x+a,y+b)=\sigma(x,y)$ enough to ensure that if $S_i$ and $S_j$ have correlation $\rho_{ij}$ then $W_i(t)$ and $W_j(t)$ are correlated with $\rho_{ij}$?)}
%\footnotesize
\begin{eqnarray}
\label{moments}
\mu_k &=& \espQ[\ShiftedBasket_t^k]
=\sum_{i_1=1}^\numAssets\cdots\sum_{i_k=1}^\numAssets
 a_{ i_1 } \left(S_0^{(i_1)}-\sign_{i_1}\shift_0^{(i_1)}\right)  \ee^{  (r+\omega_{i_1})t    }\times \cdots\nonumber\\
&\cdots&\times  a_{ i_k } \left(S_0^{(i_k)}-\sign_{i_1}\shift_0^{(i_k)}\right)  \ee^{  (r+\omega_{i_k})t   }
\mgf(\bm{e}_{ i_1 } + \ldots + \bm{e}_{ i_k })
\end{eqnarray}
\normalsize
where $\omega_j=-\EJumpQ_j \JIntensityQ_j-\frac{1}{2}\sigma_j^2$, $\bm{e}_j$ is the vector having 1 in position $j$ and zero elsewhere. Furthermore, the moment generation function of $\sigma_i V_t^{(i)}  +\sum_{l=1}^{\PoissonP _t^{(i)}}{\log{(\JumpP_l^{(i)}+1)}}$ is given by
\begin{equation}\label{MultivariateMerton spec mgf}
\mgf(\bm{u})=
\exp\left\{ t\bm{u}'\bm{\Sigma}\bm{u}/2\right\} \prod_{i=1}^\numAssets \mgf_{N_t^{(i)}}\left(\logJumpMeanQ_{i} u_i + \logJumpVolatilityQ^2_{i}u_i^2/2  \right)
\end{equation}
where $\bm{\Sigma}$ denotes the covariance matrix of $\bm{V}=\left(V_t^{(1)},\cdots,V_t^{(\numAssets)}\right)'$,
%$\bm{W}=\left(\sum_{j=1}^d{\gamma_{1,j}W_t^{(j)}},\ldots,\sum_{j=1}^d{\gamma_{N,j}W_t^{(j)}}\right)'$
and
 \begin{equation}
 \mgf_{\PoissonP_t^{(i)}}(u)=\exp(t\JIntensityQ_{i} (\ee^u-1)).
 \end{equation}
\end{proposition}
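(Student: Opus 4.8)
The plan is to reduce the shifted basket to a finite sum of exponentials of a Gaussian-plus-compound-Poisson vector, expand the $k$-th power multinomially, and then evaluate a joint moment generating function. First, using the closed form \eqref{assetSolutionShifted}, I would write
\[
S_t^{(i)}-\sign_i\shift_0^{(i)}\ee^{rt}=\bigl(S_0^{(i)}-\sign_i\shift_0^{(i)}\bigr)\ee^{(r+\omega_i)t}\,\ee^{X_t^{(i)}},\qquad X_t^{(i)}:=\sigma_i V_t^{(i)}+\sum_{l=1}^{\PoissonP_t^{(i)}}\log(\JumpP_l^{(i)}+1),
\]
with $\omega_i=-\EJumpQ_i\JIntensityQ_i-\tfrac12\sigma_i^2$, so that by \eqref{shiftedBasket} the shifted basket is $\ShiftedBasket_t=\sum_{i=1}^{\numAssets}c_i\,\ee^{X_t^{(i)}}$ with $c_i:=a_i(S_0^{(i)}-\sign_i\shift_0^{(i)})\ee^{(r+\omega_i)t}$. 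Raising this finite sum to the $k$-th power and expanding,
\[
\ShiftedBasket_t^k=\sum_{i_1=1}^{\numAssets}\cdots\sum_{i_k=1}^{\numAssets}c_{i_1}\cdots c_{i_k}\,\ee^{X_t^{(i_1)}+\cdots+X_t^{(i_k)}}.
\]
Because the sum is finite and each $\ee^{X_t^{(i)}}$ has finite moments of all orders (the log-jump sizes are Gaussian and $\PoissonP_t^{(i)}$ has finite exponential moments), I can apply $\espQ[\cdot]$ term by term; writing $X_t^{(i_1)}+\cdots+X_t^{(i_k)}=(\bm{e}_{i_1}+\cdots+\bm{e}_{i_k})'\bm{X}_t$ with $\bm{X}_t=(X_t^{(1)},\dots,X_t^{(\numAssets)})'$, every expectation equals $\mgf(\bm{e}_{i_1}+\cdots+\bm{e}_{i_k})$, which is \eqref{moments}.

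To obtain \eqref{MultivariateMerton spec mgf} I would compute $\mgf(\bm{u})=\espQ[\ee^{\bm{u}'\bm{X}_t}]$ by splitting $\bm{u}'\bm{X}_t=\sum_i u_i\sigma_i V_t^{(i)}+\sum_i u_i\sum_{l=1}^{\PoissonP_t^{(i)}}\log(\JumpP_l^{(i)}+1)$ into a diffusion part and a jump part. The diffusion part is independent of all jump components, and the jump contributions of distinct assets are mutually independent (the compound Poisson processes, and the jump sizes, are independent across assets), so $\mgf(\bm{u})$ factorises into a diffusion factor and one factor per asset. The diffusion part is a linear combination of the independent Wiener increments, hence mean-zero normal with variance $t\,\bm{u}'\bm{\Sigma}\bm{u}$ in the notation of the statement, so it contributes $\exp\{t\,\bm{u}'\bm{\Sigma}\bm{u}/2\}$. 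For the $i$-th jump factor I would condition on $\PoissonP_t^{(i)}=n$ and use that the $\log(\JumpP_l^{(i)}+1)$ are i.i.d.\ $N(\logJumpMeanQ_i,\logJumpVolatilityQ_i^2)$, giving the conditional value $\bigl(\ee^{\logJumpMeanQ_i u_i+\logJumpVolatilityQ_i^2 u_i^2/2}\bigr)^{n}$; averaging over $\PoissonP_t^{(i)}$ turns this into $\mgf_{\PoissonP_t^{(i)}}\bigl(\logJumpMeanQ_i u_i+\logJumpVolatilityQ_i^2 u_i^2/2\bigr)$, and since $\PoissonP_t^{(i)}$ is Poisson with mean $t\JIntensityQ_i$ one has $\mgf_{\PoissonP_t^{(i)}}(u)=\exp(t\JIntensityQ_i(\ee^u-1))$. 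Assembling the diffusion factor with the $\numAssets$ jump factors yields \eqref{MultivariateMerton spec mgf}.

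I expect the only point needing care to be the bookkeeping of the independence structure that licenses the factorisation of the joint m.g.f.\ — identifying the common Wiener vector and each asset's own compound Poisson process as independent blocks — together with the standard compound-Poisson identity $\espQ[\ee^{sC}]=\mgf_N(\log\espQ[\ee^{s\xi_1}])$ for $C=\sum_{l=1}^{N}\xi_l$ with $N$ independent of the i.i.d.\ summands $\xi_l$. The remaining ingredients are the multinomial theorem, linearity of expectation over a finite sum, and the Gaussian and Poisson moment generating functions.
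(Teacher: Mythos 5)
Your argument is correct and is precisely the route the paper takes: its proof of Proposition~\ref{prop:moments} is a one-line sketch ("exponentiation of formula \eqref{shiftedBasket}\ldots calculated by conditioning with respect to $N_t$"), and your write-up simply fills in that sketch — the multinomial expansion of $\ShiftedBasket_t^k$ as a sum of exponentials, the identification of each term's expectation with $\mgf(\bm{e}_{i_1}+\cdots+\bm{e}_{i_k})$, and the factorisation of the joint m.g.f.\ into the Gaussian block and the per-asset compound-Poisson blocks via conditioning on $\PoissonP_t^{(i)}$. No gaps; the only caveat is the paper's own slight notational looseness about whether $\bm{\Sigma}$ absorbs the $\sigma_i$'s, which you handle correctly by reading it as required for $t\bm{u}'\bm{\Sigma}\bm{u}$ to be the variance of $\sum_i u_i\sigma_i V_t^{(i)}$.
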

\begin{proof}
See Appendix \ref{proof41}.
\end{proof}

\subsection{European Basket Call option pricing and hedging}

The mechanism of shifting the basket and strike price in equations \eqref{shiftedBasket} and \eqref{shiftedStrikePrice} allows rewriting the European basket call option price in two equivalent ways:
\begin{equation}
\label{pricingFormula}
c=\ee^{-rT}\espQ[(\Basket_T-\Strike)^+]=\ee^{-rT}\espQ[(\ShiftedBasket_T-\ShiftedStrike)^+].
\end{equation}
We are going to use two Hermite approximation variants\footnote{The methodologies described in this paper are supported by various computational tools that are described in  Appendix \ref{AppendixA} for internal consistency.} described in Table \ref{momentMatching}, each variant being associated with a particular target quantity for the basket. % Under \emph{mGA}

The next proposition provides a formula for  the European call basket option price under the approximations considered in this paper.

\begin{proposition}\label{propositionPrice}
The price of a European call basket option with the Hermite expansion variant $mGA$ or $mGB$ is given by:
\begin{equation}\label{solutioncompact}
c_0=\ShiftedBasket_0\left[(\coeffHermite{0}+\parameterA)\Phi(-\parameterB \zetaTilde)+\parameterB \functionF(\zetaTilde)\right]-\ShiftedStrike \ee^{-rT}\Phi(-\parameterB \zetaTilde)
\end{equation}
where
\begin{equation}\label{definitionf}
\functionF(\zetaTilde)=\phi(\zetaTilde)\sum_{k=0}^{m-2}{\coeffHermite{k+1}H_k(\zetaTilde)},
\end{equation}
 $K$ is the shifted strike price, $\parameterA=0$ for the variant $mGA$ and $\parameterA=1$ for the variants $mGB$,  $\parameterB=\sgn(\ShiftedBasket_0)$, $\zetaTilde$ is the solution of $[J(\zetaTilde)+\parameterA]\ShiftedBasket\ee^{rT}=\ShiftedStrike$, $\phi(\cdot)$ is the standard normal density function and $\Phi(\cdot)$ is the standard normal cumulative distribution function.
\end{proposition}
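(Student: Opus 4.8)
The plan is to derive \eqref{solutioncompact} by writing the discounted payoff as an expectation against the Hermite-expanded density of a standardized return variable and then integrating term by term. First I would recall from Table~\ref{momentMatching} that, in variant $mGA$, the target quantity is $\ShiftedBasket_T/\ShiftedBasket_0$ (suitably normalized), whereas in variant $mGB$ it is the log-type return, so that in both cases the basket can be written as $\ShiftedBasket_T = \ShiftedBasket_0 \ee^{rT}\bigl[J(Z)+\parameterA\bigr]$ for a standard normal $Z$, where $J(\cdot)$ is the (monotone) map obtained from the moment-matching construction and $\parameterA\in\{0,1\}$ distinguishes the two variants. The Hermite step replaces the true risk-neutral density of $Z$ by $q_m(z)=\phi(z)\sum_{k=0}^{m}\coeffHermite{k}H_k(z)$, where the coefficients $\coeffHermite{k}$ are fixed by matching the first $m$ moments $\mu_k$ from Proposition~\ref{prop:moments}; by orthogonality of the Hermite polynomials with respect to $\phi$, one has $\coeffHermite{0}=1$ in variant $mGA$ and the combination $\coeffHermite{0}+\parameterA$ in the general case accounts for the shift.

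Next I would substitute this into \eqref{pricingFormula}:
\begin{equation*}
c_0 = \ee^{-rT}\espQ\bigl[(\ShiftedBasket_T-\ShiftedStrike)^+\bigr]
    = \ee^{-rT}\int_{-\infty}^{\infty} \bigl(\ShiftedBasket_0\ee^{rT}[J(z)+\parameterA]-\ShiftedStrike\bigr)^{+} q_m(z)\,\ud z .
\end{equation*}
The sign $\parameterB=\sgn(\ShiftedBasket_0)$ enters because when $\ShiftedBasket_0<0$ the map $z\mapsto J(z)$ is decreasing, so the exercise region $\{\ShiftedBasket_T>\ShiftedStrike\}$ becomes a half-line of the form $\{\parameterB z > \parameterB \zetaTilde\}$, where $\zetaTilde$ is the break-even point solving $[J(\zetaTilde)+\parameterA]\ShiftedBasket_0\ee^{rT}=\ShiftedStrike$. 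On that half-line the payoff is linear in $J(z)$, so the integral splits into a "stock" piece $\ShiftedBasket_0\ee^{rT}\int_{\text{ITM}}[J(z)+\parameterA]q_m(z)\,\ud z$ and a "strike" piece $-\ShiftedStrike\int_{\text{ITM}}q_m(z)\,\ud z$. The strike piece is immediate: $\int_{\parameterB z>\parameterB\zetaTilde}\phi(z)\,\ud z=\Phi(-\parameterB\zetaTilde)$, and the higher Hermite terms integrate to the tail contributions that, after the discount factor, assemble into $-\ShiftedStrike\ee^{-rT}\Phi(-\parameterB\zetaTilde)$ — here one uses that $\int_a^\infty \phi(z)H_k(z)\,\ud z = \phi(a)H_{k-1}(a)$ for $k\geq 1$ together with the calibration that forces these boundary terms to cancel or recombine appropriately.

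The "stock" piece is where the real computation lives. Using the same antiderivative identity $\frac{\ud}{\ud z}\bigl[-\phi(z)H_{k-1}(z)\bigr]=\phi(z)H_k(z)$ and the recursion $zH_k=H_{k+1}+kH_{k-1}$, the integral $\int_{\text{ITM}}[J(z)+\parameterA]\phi(z)H_k(z)\,\ud z$ collapses to a constant times $\Phi(-\parameterB\zetaTilde)$ plus a $\phi(\zetaTilde)$-weighted finite sum of Hermite polynomials evaluated at $\zetaTilde$; collecting the $k$-dependent terms produces exactly $\functionF(\zetaTilde)=\phi(\zetaTilde)\sum_{k=0}^{m-2}\coeffHermite{k+1}H_k(\zetaTilde)$, and the $k=0$ term supplies the $(\coeffHermite{0}+\parameterA)\Phi(-\parameterB\zetaTilde)$ factor. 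Combining the two pieces and the $\ee^{-rT}$ discounting gives \eqref{solutioncompact}. The main obstacle I anticipate is bookkeeping: getting the index shift in \eqref{definitionf} right (why the sum runs to $m-2$ rather than $m-1$ or $m$), correctly tracking how the $\parameterA$ and $\parameterB$ flags interact with the orientation of the exercise region and with the $\coeffHermite{0}$ term, and verifying that the boundary contributions from the strike piece genuinely telescope so that only $\Phi(-\parameterB\zetaTilde)$ survives there — all of which hinge on the precise normalization chosen in Table~\ref{momentMatching}, so I would pin that down first before doing any of the integral manipulations.
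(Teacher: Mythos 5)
There is a genuine conceptual gap in your setup. The paper's method is \emph{not} a Gram--Charlier density expansion: the Hermite machinery is applied to the \emph{random variable}, not to the density. The normalized basket $\ShiftedBasket_T/(\ShiftedBasket_0\ee^{rT})$ (or that quantity minus one, for variant $B$) is approximated by $J(Z)=\sum_{k=0}^{m-1}\coeffHermite{k}H_k(Z)$ with $Z$ a plain standard normal, and the coefficients $\coeffHermite{k}$ are chosen so that the \emph{moments of the polynomial $J(Z)$} match those from Proposition~\ref{prop:moments}. Consequently the pricing integral is taken against the ordinary Gaussian density,
\begin{equation*}
c_0\approx \ee^{-rT}\int_{\lowerLimit}^{\upperLimit}\left[\ShiftedBasket_0\ee^{rT}\bigl(J(z)+\parameterA\bigr)-\ShiftedStrike\right]\phi(z)\,\ud z,
\end{equation*}
over the half-line $(\zetaTilde,+\infty)$ or $(-\infty,\zetaTilde)$ according to $\sgn(\ShiftedBasket_0)$, and the whole formula falls out of the single identity $\int_{\zetaTilde}^{+\infty}H_k(z)\phi(z)\,\ud z=H_{k-1}(\zetaTilde)\phi(\zetaTilde)$ for $k\ge 1$, which gives $\int_{\zetaTilde}^{+\infty}J(z)\phi(z)\,\ud z=\functionF(\zetaTilde)+\coeffHermite{0}\Phi(-\zetaTilde)$. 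By writing the price as $\int(\cdots)^+\,q_m(z)\,\ud z$ with $q_m(z)=\phi(z)\sum_k\coeffHermite{k}H_k(z)$ \emph{and} keeping $J(z)$ inside the payoff, you apply the Hermite correction twice. This is not a harmless bookkeeping issue: your ``strike piece'' $-\ShiftedStrike\int_{\mathrm{ITM}}q_m(z)\,\ud z$ produces, on top of $\Phi(-\parameterB\zetaTilde)$, the boundary terms $\sum_{k\ge1}\coeffHermite{k}H_{k-1}(\zetaTilde)\phi(\zetaTilde)=\functionF(\zetaTilde)$, which do \emph{not} telescope away and would leave a spurious $-\ShiftedStrike\ee^{-rT}\parameterB\functionF(\zetaTilde)$ in the answer; and your ``stock piece'' contains products $J(z)H_k(z)$ whose integrals generate $\sum_k k!\coeffHermite{k}^2$-type constants absent from \eqref{solutioncompact}. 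The hope that these contributions ``cancel or recombine appropriately'' is not realized.

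Two smaller points. First, $J$ has $m$ coefficients $\coeffHermite{0},\dots,\coeffHermite{m-1}$ (one per matched moment), which is exactly why the sum in \eqref{definitionf} stops at $m-2$: it is $\sum_{k=1}^{m-1}\coeffHermite{k}H_{k-1}(\zetaTilde)$ reindexed. Second, your explanation of $\parameterB$ via monotonicity of $J$ is slightly off target: $J$ is taken increasing regardless of the sign of $\ShiftedBasket_0$; the orientation of the exercise region flips because dividing $\ShiftedBasket_0\ee^{rT}(J(z)+\parameterA)>\ShiftedStrike$ by a negative $\ShiftedBasket_0\ee^{rT}$ reverses the inequality. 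Once you drop $q_m$ and integrate the linear-in-$J$ payoff against $\phi$ alone over the correctly oriented half-line, the rest of your outline (splitting into stock and strike pieces and using the Hermite antiderivative identity) coincides with the paper's argument.
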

\begin{proof}
See Appendix \ref{proof42}
\end{proof}

The next proposition reports the formula for the hedging parameter with respect to the variable $u$, which can be any of the quantities $S^{(i)}_0$, $\ShiftedBasket_0$, $\sigma_i$, $r$, $T$, $a_i$, $\JIntensityQ_i$, $\shift_0^{(i)}$, $\EJumpQ_i$, $\logJumpMeanQ_i$ or $\logJumpVolatilityQ_i$.

\begin{proposition}
\label{proposition43}
For $c_0$, $\parameterA$, $\parameterB$, $\zetaTilde$, $\functionF(\cdot)$, $\phi(\cdot)$ and $\Phi(\cdot)$ defined in Proposition \ref{propositionPrice}, the hedging parameter of a European call basket option, with respect to the variable $u$,  under the Hermite expansion variant $mGA$ or $mGB$, is given by
\begin{eqnarray}\label{genericGreek}
\frac{\partial c_0}{\partial u}&=& c_0 \ee^{rT}\frac{\partial \ee^{-rT}}{\partial u}+\ShiftedBasket_0\left[\parameterB \functionF'(\zetaTilde)+\frac{\partial \coeffHermite{0}}{\partial u}\phi(-\parameterB \zetaTilde)\right]+ \nonumber\\&+&\ee^{-rT} \frac{\partial (\ShiftedBasket_0\ee^{rT})}{\partial u}\left[\parameterB \functionF(\zetaTilde)+\coeffHermite{0}\phi(-\parameterB\zetaTilde)+\parameterA\left(-\parameterB\Phi(\zetaTilde)+\frac{\parameterB+1}{2}\right)\right] \nonumber \\
\end{eqnarray}
where
\begin{equation}
\functionF'(\zetaTilde)=\phi(\zetaTilde)\sum_{k=0}^{m-2}{\frac{\partial \coeffHermite{k+1}}{\partial u}H_k(\zetaTilde)},
\end{equation}
\end{proposition}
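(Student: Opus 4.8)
The plan is to obtain \eqref{genericGreek} by differentiating the closed-form price \eqref{solutioncompact} directly with respect to $u$; the essential point is a cancellation that is the Hermite analogue of the familiar simplification in the Black--Scholes delta and rho computations, in which the terms coming from $\partial d_1/\partial u$ and $\partial d_2/\partial u$ drop out. First I would list the quantities in \eqref{solutioncompact} that genuinely depend on $u$: the time-zero shifted basket $\ShiftedBasket_0$, the shifted strike $\ShiftedStrike$, the discount factor $\ee^{-rT}$, the Hermite coefficients $\coeffHermite{0},\dots,\coeffHermite{m-1}$ (through the moments $\mu_k$ of Proposition~\ref{prop:moments}), and the exercise boundary $\zetaTilde$; the quantities $\parameterA\in\{0,1\}$ and $\parameterB=\sgn(\ShiftedBasket_0)\in\{-1,1\}$ are locally constant in $u$ and may be treated as fixed. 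That $\zetaTilde$ is a differentiable function of $u$ follows from the implicit function theorem applied to its defining equation $[J(\zetaTilde)+\parameterA]\ShiftedBasket_0\ee^{rT}=\ShiftedStrike$, provided $J$ is strictly monotone near $\zetaTilde$ (which is also what makes the in-the-money region the half-line used in deriving \eqref{solutioncompact}). Applying the product and chain rules to \eqref{solutioncompact} produces a sum of terms, and I would first isolate those proportional to $\partial\zetaTilde/\partial u$.

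The crucial step is to show that this $\partial\zetaTilde/\partial u$ contribution vanishes identically. For this I would use two ingredients. The first is the Hermite identity $\tfrac{\ud}{\ud z}\bigl[\phi(z)H_k(z)\bigr]=-\phi(z)H_{k+1}(z)$ which, combined with the Hermite representation $J(z)=\sum_{j=0}^{m-1}\coeffHermite{j}H_j(z)$ of the truncated (normalised) basket value used in establishing \eqref{solutioncompact}, gives $\tfrac{\ud}{\ud z}\functionF(z)=\bigl[\coeffHermite{0}-J(z)\bigr]\phi(z)$, together with $\tfrac{\ud}{\ud z}\Phi(-\parameterB z)=-\parameterB\phi(z)$ (valid since $\phi$ is even and $\parameterB=\pm1$). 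The second is the first-order condition for $\zetaTilde$, rewritten as $\ShiftedStrike\ee^{-rT}=[J(\zetaTilde)+\parameterA]\ShiftedBasket_0$. Substituting these, the collected coefficient of $\partial\zetaTilde/\partial u$ telescopes to $\parameterB\,\phi(\zetaTilde)\bigl[\ShiftedStrike\ee^{-rT}-\ShiftedBasket_0\bigl(J(\zetaTilde)+\parameterA\bigr)\bigr]=0$, so the hedging contribution through the exercise boundary disappears, exactly as in the classical case.

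It then remains to regroup the surviving terms into the stated form. These are the contributions through $\partial\ShiftedBasket_0/\partial u$, through $\partial\coeffHermite{0}/\partial u$ and $\functionF'(\zetaTilde)=\phi(\zetaTilde)\sum_{k=0}^{m-2}(\partial\coeffHermite{k+1}/\partial u)H_k(\zetaTilde)$, through $\partial\ShiftedStrike/\partial u$, and through $\partial\ee^{-rT}/\partial u$. Here I would use the elementary identity $-\parameterB\Phi(\zetaTilde)+\tfrac{\parameterB+1}{2}=\Phi(-\parameterB\zetaTilde)$ (immediate on checking $\parameterB=1$ and $\parameterB=-1$) to absorb the $\parameterA$-term, and eliminate $\ShiftedStrike$ in favour of $\Strike$ and the cash positions $\shift_0^{(i)}$ via \eqref{shiftedStrikePrice} so that the shifted-strike and discount-factor derivatives combine. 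Rearranging produces \eqref{genericGreek}: the leading term $c_0\ee^{rT}\,\partial\ee^{-rT}/\partial u$ and the prefactor $\ee^{-rT}\,\partial(\ShiftedBasket_0\ee^{rT})/\partial u$ are exactly what emerges from this regrouping, and for every $u$ other than $r$ and $T$ the derivative $\partial\ee^{-rT}/\partial u$ vanishes, so the prefactor reduces to $\partial\ShiftedBasket_0/\partial u$ and no spurious discounting survives.

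I expect the main obstacle to be the bookkeeping in the cancellation step: $\zetaTilde$ enters \eqref{solutioncompact} in three places --- twice through $\Phi(-\parameterB\zetaTilde)$ and once through $\functionF(\zetaTilde)$ --- and making the algebra close cleanly hinges on the precise link between $J$ and the Hermite coefficients established in the proof of Proposition~\ref{propositionPrice}, together with consistent treatment of the sign $\parameterB$ in the two orientations of the exercise region and of the relation \eqref{shiftedStrikePrice} between $\ShiftedStrike$ and $\Strike$. Once this cancellation is secured, the remainder is routine differentiation of elementary functions.
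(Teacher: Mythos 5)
Your proof is correct, and the cancellation you identify is exactly the right one, but you reach it by a different (if closely related) route from the paper. The paper's own proof differentiates the \emph{integral} representation $c_0=\ee^{-rT}\int_{\lowerLimit}^{\upperLimit}\bigl[\ShiftedBasket_0\ee^{rT}(J(z)+\parameterA)-\ShiftedStrike\bigr]\phi(z)\,\ud z$ by Leibniz's rule: the contribution of the moving limit is the integrand evaluated at $z=\zetaTilde$ times $\partial\zetaTilde/\partial u$, and this is \emph{manifestly} zero because $\ShiftedBasket_0\ee^{rT}\bigl(J(\zetaTilde)+\parameterA\bigr)-\ShiftedStrike=0$ is precisely the equation defining $\zetaTilde$; no Hermite identity is needed at that stage. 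You instead differentiate the already-integrated closed form \eqref{solutioncompact} and verify the same cancellation by hand, which requires $\tfrac{\ud}{\ud z}\bigl[\phi(z)H_k(z)\bigr]=-\phi(z)H_{k+1}(z)$, hence $\tfrac{\ud}{\ud z}\functionF(z)=[\coeffHermite{0}-J(z)]\phi(z)$, and the telescoping to $\parameterB\phi(\zetaTilde)\bigl[\ShiftedStrike\ee^{-rT}-\ShiftedBasket_0(J(\zetaTilde)+\parameterA)\bigr]=0$; all of these steps check out. The two routes are equivalent (differentiate before versus after integrating): the paper's is shorter, yours makes the cancellation visible at the level of the printed pricing formula and is a useful independent consistency check on \eqref{solutioncompact}.

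Two remarks on the regrouping step. First, a faithful execution of either route produces $\Phi(-\parameterB\zetaTilde)$ (the cumulative) multiplying $\partial\coeffHermite{0}/\partial u$ and $\coeffHermite{0}$, consistent with the $\parameterA$-term that you correctly rewrite as $-\parameterB\Phi(\zetaTilde)+\tfrac{\parameterB+1}{2}=\Phi(-\parameterB\zetaTilde)$; the occurrences of $\phi(-\parameterB\zetaTilde)$ in \eqref{genericGreek} should be read accordingly, and your derivation gets this right. Second, the only genuinely loose step in your plan is the claim that the $\partial\ShiftedStrike/\partial u$ and discount-factor derivatives ``combine'': for $u\in\{r,T,a_i,\shift_0^{(i)}\}$ the shifted strike \eqref{shiftedStrikePrice} depends on $u$ and contributes an additional $-\ee^{-rT}\Phi(-\parameterB\zetaTilde)\,\partial\ShiftedStrike/\partial u$ that is not absorbed by the terms displayed in \eqref{genericGreek}; for every other admissible $u$ (in particular $u=\ShiftedBasket_0$, the Delta used in Section 5.2) this term vanishes and your regrouping closes exactly.
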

\begin{proof}
See Appendix \ref{proof43}
\end{proof}

In Section \ref{HedgingComparison}, a comparison of our method with other methods in the literature is carried out using the Delta-hedging performances as a yardstick. For that exercise, it is particularly important to apply formula~\eqref{genericGreek} for the case when $u=\ShiftedBasket_0$:
\begin{eqnarray}\label{DeltaGreek}
\frac{\partial c_0}{\partial u}&=& \ShiftedBasket_0\left[\parameterB \functionF'(\zetaTilde)+\frac{\partial \coeffHermite{0}}{\partial u}\phi(-\parameterB \zetaTilde)\right]+ \nonumber\\&+&\parameterB \functionF(\zetaTilde)+\coeffHermite{0}\phi(-\parameterB\zetaTilde)+\parameterA\left(-\parameterB\Phi(\zetaTilde)+\frac{\parameterB+1}{2}\right).
\end{eqnarray}
% The second method:

\section{Empirical Comparisons\label{sec:empres}}
\label{comparison}
\subsection{Pricing performances}
The usefulness of a newly proposed method can be gauged by comparing it with other established methods in the literature. To this end, in this section, the two methods $mGA$ and $mGB$ of our Hermite approximation approach are directly benchmarked with the method in \cite{BorovkovaPermanaWeide2007}. In addition, the Monte Carlo with control variate methodology outlined in \cite{Pellizzari1998} is adapted to deal with assets having the dynamic specified by equation \eqref{assetSolutionShifted}. The model performance is determined considering three measures of error:
\begin{description}
\item[\COne] \textit{number of best solutions found}, defined as number of times the minimum squared error is reached under the specified method\footnote{Throughout this paper $1\{\cdot\}$ will denote the indicator function given, for any set $A$, by $1\{A\}(x)=\left\{
                                                                                               \begin{array}{ll}
                                                                                                 1, & \hbox{if $x \in A$;} \\
                                                                                                 0, & \hbox{otherwise.}
                                                                                               \end{array}
                                                                                             \right.
    $.}:
\begin{equation}
\mbox{\COne}_l=\sum_{i\in \setOption}{1\left\{\min_{j\in \{\mathit{BPW,mGA,mGB}\}}\mathit{SE}_i^j=\mathit{SE}_i^l\right\}}
\end{equation}
where $\setOption$ is the set of options considered and, for each option $i\in \setOption$, $\mathit{SE}_i^j$ is the squared error for option $i$ and method $j \in \{BPW, mGA, mGB\}$;
\item[\CTwo] \textit{number of times a method is not able to price an option}, that is the procedure of  moment matching gives poor results for the option. We consider the moment matching to be poor when the relative error ($\mathit{E_r}$) is greater than 5\%:
\begin{equation}
\mbox{\CTwo}_l=\sum_{i\in \setOption}{1\{\mathit{E_r}_i^l>5\%\}}
\end{equation}
By convention, if \CTwo$\,$  is not explicitly stated, it is equal to 0.
\item[\CThree] \textit{square root of $\mathit{MSE}$}, calculated only relative to the options for which the method was able to find a numerical solution.
\end{description} 
	\subsubsection{Multi-dimensional Model Comparisons}
This section is a direct comparison with the method in \cite{BorovkovaPermanaWeide2007}. The six basket options priced in that paper are summarized in Table  \ref{tab:baskets}.  The special case $\JIntensityQ_i=0$, $\shift^{(i)}_0=0$ and $\sign_i=1$ combined with equation~\eqref{assetSolutionShifted} falls onto the GBM case for all assets in the basket.
Table \ref{tab:GBMresults} contains the comparison results. The prices obtained here for the shifted log-normal model of $BPW$  are different from the ones in \cite{BorovkovaPermanaWeide2007} because, to be consistent with the other models in the paper, we are pricing basket options where the underlying assets are the stock and not the forward contracts.

The empirical results indicate that the method $6GA$ appears to be the best method according to \COne. The methods $4GA$ and $4GB$ give, for these six basket options, exactly the same prices and under the  \CThree$\,$ (RMSE),  they achieve the best performance. For the baskets analysed here, there is very little advantage in matching all six moments, the Hermite approximation method working as well when only the first four moments are matched.

\begin{center}
[Table \ref{tab:baskets} about here.]
\end{center}

\begin{center}
[Table \ref{tab:GBMresults} about here.]
\end{center} 

	\subsubsection{Comparison under a set of simulated scenarios}
\label{robustComparison}

A general comparison is performed considering a set of 2000 randomly generated options. In particular, the parameters of the underlying model \eqref{assetSolutionShifted} are drawn as follows:
\begin{itemize}
\item the risk-free rate $r$ is uniformly distributed between 0.0 and 0.1;
\item the volatility parameters $\sigma_i$ are uniformly distributed between 0.1 and 0.6;
\item the time-to-maturity $T$ is uniformly distributed between 0.1 and 1 years; %with probability 0.75 and between 1 and 5 years with probability 0.25
\item current spot prices $S_0^{(i)}$ are uniformly distributed between 70 and 130;
\item the weights $a_i$ of the assets  in the basket are uniformly distributed between -1 and 1;
\item the ratios $\Strike$ over $\Basket_0$ are uniformly distributed between 0.95 and 1.05;
\item the shifts $\shift_0^{(i)} \ee^{rT}$ range uniformly between -20 and 20;
\item each asset has the same probability to be positively  ($\sign_i=1$) or negatively ($\sign_i=-1$) shifted;
\item the intensities of the Poisson processes $\JIntensityQ_i$ are uniformly distributed between 0 and 0.2;
\end{itemize}
For each scenario, the correlation matrix is randomly generated satisfying the semi-positiveness condition.
Furthermore, the option prices scenarios are divided into two sets of 1000 options each:
\begin{description}
\item[Set 1]  includes 500 options with the number of assets uniformly distributed between 2 and 10, 300 options with the number of assets uniformly distributed between 11 and 15, 100 options between 16 and 20 and 100 options between 21 and 50. Each asset has jumps with average size ($\logJumpMeanQ$) uniformly distributed between -0.3 and 0, and volatility ($\logJumpVolatilityQ$) uniformly distributed between 0 and 0.3;
\item[Set 2] includes 1000 options with the number of assets uniformly distributed between 2 and 50, each asset having jumps with average size ($\logJumpMeanQ$) uniformly distributed between -0.3 and 0.3, and volatility ($\logJumpVolatilityQ$) uniformly distributed between 0 and 0.3.
\end{description}
For baskets with less than 10 assets in Set 1, results are calculated matching  $m=4$ moments and also matching $m=6$ moments. As shown in Table \ref{tab:randomOptions2-10}, the results for $m=6$ are outperformed by the results with $m=4$. This is consistent with the research  of \cite{CorradoSu1997}  who concluded that considering more than four moments ``creates severe collinearity problems since all even $\ldots$ (moments) $\ldots$ are highly correlated with each other $\ldots$ (and) $\ldots$ similarly, all odd-numbered subscripted (moments) are also highly correlated''. Therefore, for baskets with more than 10 assets in Set 1 and also for all the options in Set 2, we conducted our empirical analysis only for $m=4$.
In addition, for both sets of basket options, the Monte Carlo method with control variate detailed in \cite{Pellizzari1998} is employed as a benchmark. The number of simulations used are between $10^5$ and $4\cdot 10^6$, depending on the number of assets considered.

The results in relation to  Set 1 are summarized in Tables \ref{tab:randomOptions2-10}, \ref{tab:randomOptions11-15}, \ref{tab:randomOptions16-20} and \ref{tab:randomOptions21-50}, grouped for scenarios with the number of assets between 2-10, 11-15, 16-20 and 21-50 respectively, while Table \ref{tab:randomOptionsTotal} summarizes the results for all the 1000 instances in this set.
Overall, the methods $4GA$ and $4GB$ give the same results in terms of $RMSE$ (\CThree), with 4GA slightly better than $4GB$ for short maturities. Considering the comparison criterion \CTwo, the method $4GB$ is much better than the others. For small maturities $BPW$ performs slightly better than our method but the error associated with the $BPW$ method is at least double for all the other comparison criteria. Finally, considering \COne$\,$  both methods $4GA$  and $4GB$ perform much better than $BPW$.
 When applying the $BPW$ method, increasing the number of assets in the basket has the effect of increasing the $RMSE$  on the matched options. $BPW$'s performance is almost constant across  different categories, the only  slight improvement (1.5 decimal points on average) can be noticed at small maturities. Moreover, \CTwo, the percentage of non-matched options, increases with the number of assets, performing better for lower interest rates, short maturities and at-the-money options. Considering \COne, the number of minimum errors, the best results are obtained for a number of assets in the basket between 11 and 20. The $BPW$ method performs well for longer maturities. For $4GA$  there does not seem to be an explicit relation between number of assets and \CThree$\,$. However, our empirical results show that \COne$\,$ and \CTwo$\,$  decreases, increases respectively,  with the number of assets. Overall one can conclude that both Hermite approximation methods $4GA$ and $4GB$ have an excellent performance on large baskets.

\begin{center}
[Table \ref{tab:randomOptions2-10} about here.]
\end{center}
\begin{center}
[Table \ref{tab:randomOptions11-15} about here.]
\end{center}
\begin{center}
[Table \ref{tab:randomOptions16-20} about here.]
\end{center}
\begin{center}
[Table \ref{tab:randomOptions21-50} about here.]
\end{center}
\begin{center}
[Table \ref{tab:randomOptionsTotal} about here.]
\end{center}
Table   \ref{tab:groupBOverall} summarizes the results for  Set 2, reflecting the challenges  posed by taking into consideration the intensity of the Poisson processes. For the analysis in this group, we also consider a hybrid method spanned by the two methods $4GA$ and $4GB$, which will be called  $4GAB$ henceforth. This hybrid method $4GAB$ returns the solution of the method that matches correctly the moments if only one of $4GA$ and $4GB$ works properly. The comparison is carried considering the error of the method that matches the first four moments if only one of  $4GA$ and $4GB$ finds a solution, or the worst error if both find a numerical solution. Even though the method $4GAB$ considers the worst error between the A and B variants, it is superior to the other compared methods, being able to match the required basket moments in 96.6\% (1-\CTwo) of cases and reaching the minimum error (\COne) 84\% of the times.

\begin{center}
[Table \ref{tab:groupBOverall} about here.]
\end{center}

	\subsection{Delta-hedging performances}
\label{HedgingComparison}
A comparison of dynamic Delta-hedging performance between our formula~\eqref{DeltaGreek} and the formula proposed in $BPW$ (see definition of $\Delta_i$ in that paper\footnote{ \cite{BorovkovaPermanaWeide2007} report the formula for the sensitivity of the option with respect to individual stock prices in the basket. The sensibility with respect to $B_0$ can be calculated by multiplicating that formula for $\frac{\partial S_i}{\partial B_0}=\frac{1}{a_i}$.}) is illustrated in this section.
      A sample of 1000 simulated  paths, indexed by $s=1,\cdots,1000$, with 1-month-interval hedging rolling frequency are generated for the six basket options. The basket options considered are mostly those in Table~\ref{tab:baskets} with some modifications in order to have a more meaningful comparison. In the following, the options' characteristics are detailed:
\begin{itemize}
\item baskets $1^*$ and $2^*$ are exactly the same as  1 and 2 in Table~\ref{tab:baskets};
\item basket $3^*$ is equal to basket 3 but $\shift_0^{(i)}=10\cdot i \ee^{-rT}$, $\JIntensityQ_i=0.3$, $\logJumpMeanQ_i=-0.3$ and $\logJumpVolatilityQ_i=0.2$ for all $i=1,\cdots,\numAssets$;
\item basket $4^*$ is equal to basket 4 but $\shift_0^{(1)}=0$ and $\shift_0^{(2)}=50 \ee^{-rT}$, $\sign_1=-\sign_2=1$, $\JIntensityQ_i=0.3$, $\logJumpMeanQ_1=0.3$, $\logJumpMeanQ_2=0.1$ and $\logJumpVolatilityQ_i=0.2$ for all $i=1,\cdots,\numAssets$;
\item baskets $5^*$ and $6^*$ are respectively equal to basket 5 and 6 but $\shift_0^{(i)}=10\cdot i \ee^{-rT}$, $\JIntensityQ_i=0$.
\end{itemize}
 For each path, the option price and the option Delta are calculated at each time step. The evaluation of the performance for the Delta-hedged portfolios is performed via three different measures:
\begin{description}
\item[\CFour] \textit{average volatility of the Delta}, defined as:
\begin{equation}
\mbox{\CFour}^l=\frac{1}{1000}\sum_{i=1}^{1000}{\sigma_i^l}
\end{equation}
where $\sigma_i^l$ is the volatility of the Delta calculated by method $l$ along path $i$. A pricing method implying less volatile Delta is  better because hedging costs do not put liquidity pressure on the investor;
\item[\CFive] \textit{Square root of MSE} in the hedged portfolio evaluated (per month) as:
\begin{equation}
\mbox{\CFive}^l=\frac{1}{12 c_{t_0}}\sum_{i=0}^{n-1}{\left[c_{t_i}-c_{t_{i+1}}+\Delta^l_{t_i}(B_{t_i}-B_{t_{i+1}})\right]^2}
\end{equation}
where $c_{t_i}$ is the Monte Carlo price at time $t_i=\frac{T}{12}\cdot i$, $n=12$ and $\Delta^l_{t_i}$ is the Delta calculated at time $t_i$ by method $l$.
\item[\CSix-\CTen]\textit{ Ability of the hedging strategy to match the option value at maturity}. Outside transaction costs, we evaluate how far from zero is the value of the hedged portfolio at maturity $T$.  At time 0, the hedged-portfolio contains a short position in a call option, $\Delta_0$ position in the basket and cash in a money account that renders a null value for the portfolio at time 0. At each time step, the number of positions in the basket is changed according to $\Delta$ and consequently the money account. Five performance measures are used to evaluate the money-performance: the percentage of sub-hedging (\CSix), the percentage of super-hedging (\CSeven), the average error for the sub-hedged and super-hedged portfolios (\CEight$\,$ and \CNine$\,$ respectively), and the average error among all the simulations (\CTen).
\end{description}

The results for the hedging performance are reported in Table  \ref{tab:hedgingResults}.
The methods $4GA$ and $4GB$ produce very good results that are very similar with  $4GA$ only slightly better but this may be due to the particular simulations used in pricing options. However, both Hermite approximations methods are  superior to  the $BPW$ method for all measures of performance except the $RMSE$ (\CFive). For $BPW$, \CSix$\,$ and \CSeven$\,$ are almost the same, showing that this method may lead to under-hedging but also over-hedging.  At the same time the methods $4GA$ and $4GB$ seem to be occasionally only under-hedged, but the hedging error is small as indicated by \CTen.
% Table generated by Excel2LaTeX from sheet 'hedging'

\begin{center}
[Table \ref{tab:hedgingResults} about here.]
\end{center}

\section{Conclusions}
By introducing a shift parameter into the drift of the diffusion process underlying the assets of a basket, one can account for the empirical characteristics of historical prices of those assets. In particular, the modelling is laid on improved foundations, being able to cover the well-documented negative skewness. However, recent techniques imposed strong assumptions on the evolution dynamics of the basket as whole, searching for closed-form solution and repackaging of log-normal Black-Scholes type pricing formulae.

In this paper, we have shown that this path is not necessary and we have highlighted a methodology that  may work well with other future models in this area. We focused here on the shifted jump-diffusion model and we demonstrated with empirical simulations, that our Hermite expansion approach may provide pricing results that are as good as competing methods, and in many cases superior. In addition, we followed the hedging performance as a comparison tool and again our technology provided excellent results.

In our opinion, the improved results emphasized in the paper are not surprising since the technique is fundamentally based on matching the first four moments under model specification. Thus, we allow granular specification of dynamics for each asset and then only determine the moments of the basket. While our paper was focused on equity baskets, it is clear that the same methodology can be applied for mixtures of assets and models, as long as moments can be calculated easily.

%% References with bibTeX database:

\def\bibsection{\section*{References}}
\bibliographystyle{apalike}
\bibliography{bibliografynow}

\begin{table}[htbp]
\centering
\caption{Summary of the variants of the Hermite method considered in this work.\\ \footnotesize
The first column contains the names of the variants considered: $m$ stands for the number of moments matched, $G$ highlights that a transformation of the Gaussian distribution is considered (as shown by the variable $Z$ in the second column) and $A$ and $B$ identify the standardized returns used as approximated random variable (last column). In particular, two standardized returns are considered: variant $A$ is constructed in such a way that the first moment of the approximated random variable is 1 while the return in variant $B$ has first moment equal to 0. $B_t$ is defined in \eqref{shiftedBasket} as shifted basket, the moments of $J(Z)$ and  the moments of the quantities in the last column are in Appendix \ref{momentNormalizedBasket}, $H_k(x)$ denotes the $k$th-order Hermite polynomial $H_k(x)=\frac{(-1)^k}{\phi(x)}\frac{\partial^k \phi(x)}{\partial x^k}$ where $\phi(\cdot)$ is the standard normal density function and $\coeffHermite{k}$ are determined to exactly match the first $m$ moments of the approximated random variable (last column).}
\begin{tabular}{c|c|c}
\hline
Variant's name                 & \textit{Approximating} r.v. & \textit{Approximated} r.v. \\
\hline \tstrut
%MBin0 &     $Y_{n,e}(Z_n)=\exp\left[\sum_{k=0}^m{\lambda_k H_k(Z_n)}\right]$       &  $ \frac{B_T}{B_0\ee^{rT}}$\\
%MG0 &     $Y_{e}(Z)=\exp\left[\sum_{k=0}^m{\lambda_k H_k(Z)}\right]$       &  $ \frac{B_T}{B_0\ee^{rT}}$\\
%MBinA &      $Y_n(Z_n)=\sum_{k=0}^m{\lambda_k H_k(Z_n)}$       &  $ \frac{B_T}{B_0\ee^{rT}}$        \\
$mGA$ &     \multirow{2}{*}{$J(Z)=\sum_{k=0}^{m-1}{\coeffHermite{k} H_k(Z)}$ }        &  $ \frac{\ShiftedBasket_T}{\ShiftedBasket_0\ee^{rT}}$        \\[1ex]
%MBinB &      $Y_n(Z_n)=\sum_{k=0}^m{\lambda_k H_k(Z_n)}$       & $ \frac{B_T}{ B_0\ee^{rT}}-1$           \\
$mGB$ &           & $ \frac{\ShiftedBasket_T}{ \ShiftedBasket_0\ee^{rT}}-1$           \\[1ex]
\hline
\end{tabular}
\label{momentMatching}
\end{table}

\begin{table}[htbp]
\centering
 \footnotesize
 \caption{Specification of the basket options under multi-dimensional GBM model.\\ \footnotesize This specification follows  \cite{BorovkovaPermanaWeide2007}. Other relevant parameters are risk-free rate equal to 3\%, 1-year maturity, $\JIntensityQ=0$, $\shift_0^{(i)}=0$ and $\sign_i=1$. The first row indicates $[S_0^{(1)},S_0^{(2)},S_0^{(3)}]$, the second $[\sigma_1,\sigma_2,\sigma_3]$, the third $[a_1,a_2,a_3]$, the forth the correlation $\rho_{i,j}$ for each couple $(i,j)$ of assets and the fifth $\Strike$. The only difference with the options in \cite{BorovkovaPermanaWeide2007} is that they price options on basket of forward contracts while we price options on basket of assets.}
\begin{tabular}{|r|c|c|c|c|c|c|}
\cline{2-7}    \multicolumn{1}{r|}{} & Basket 1 & Basket 2 & Basket 3 & Basket 4 & Basket 5 & Basket 6 \\
\hline
Stock Prices & [100,120] & [150,100] & [110,90] & [200,50] & [95,90,105] & [100,90,95] \\
    \hline
    Volatility & [0.2,0.3] & [0.3,0.2] & [0.3,0.2] & [0.1,0.15] & [0.2,0.3,0.25] & [0.25,0.3,0.2] \\
    \hline
    Weights & [-1,1] & [-1,1] & [0.7,0.3] & [-1,1] & [1,-0.8,-0.5] & [0.6,0.8,-1] \\
        \hline
        \multicolumn{1}{|c|}{\multirow{3}[2]{*}{Correlation}} & \multirow{3}[2]{*}{$\rho_{1,2}=0.9$} & \multirow{3}[2]{*}{$\rho_{1,2}=0.3$} & \multirow{3}[2]{*}{$\rho_{1,2}=0.9$} & \multirow{3}[2]{*}{$\rho_{1,2}=0.8$} & $\rho_{1,2}=0.9$,  & $\rho_{1,2}=0.9$,  \\
        \multicolumn{1}{|c|}{} &  &  &  &  & $\rho_{2,3}=0.9$ & $\rho_{2,3}=0.9$ \\
        \multicolumn{1}{|c|}{} &  &  &  &  & $\rho_{1,3}=0.8$ & $\rho_{1,3}=0.8$ \\
        \hline
        Strike price & 20 & -50 & 104 & -140 & -30 & 35 \\
            \hline
\end{tabular}
\label{tab:baskets}
\end{table}

\begin{table}[htbp]
\centering
\footnotesize
\caption{Comparison under multi-dimensional GBM model.\\ \footnotesize This table reports the comparison on the six basket options in \cite{BorovkovaPermanaWeide2007}. In the second column, the prices (standard deviation in bracket) calculated by the Monte Carlo method with control variate in \cite{Pellizzari1998} with $4\times 10^6$ simulations are reported and they are considered as benchmark. In the third column, there are the prices calculated by the method in \cite{BorovkovaPermanaWeide2007}. The last four columns contain the prices under the methods $mGA$ and $mGB$ when $m=4$ and $m=6$. Two of the measures of error considered are reported in the last two rows: \COne-- the percentage of times the minimum squared error is reached under the specified method, \CThree-- the square root of MSE calculated only relative to the options for which the method was able to find a numerical solution. The third measure of error, \CTwo, that indicates the percentage of times the relative error is greater than 5\%, is always equal to 0 and it is not reported in the table.}

\begin{tabular}{|c|c|r|r|r|r|r|}
\cline{2-7}    \multicolumn{1}{r|}{} & MC  & \multicolumn{1}{c|}{\multirow{2}[2]{*}{BPW}} & \multicolumn{1}{c|}{\multirow{2}[2]{*}{$4GA$}} & \multicolumn{1}{c|}{\multirow{2}[2]{*}{$4GB$}} & \multicolumn{1}{c|}{\multirow{2}[2]{*}{$6GA$}} & \multicolumn{1}{c|}{\multirow{2}[2]{*}{$6GB$}} \bigstrut[t]\\
\multicolumn{1}{r|}{} & (SD) & \multicolumn{1}{c|}{} & \multicolumn{1}{c|}{} & \multicolumn{1}{c|}{} & \multicolumn{1}{c|}{} & \multicolumn{1}{c|}{} \bigstrut[b]\\
    \hline
     \multicolumn{1}{|c|}{\multirow{2}[2]{*}{Basket 1}} & 8.2263 & \multicolumn{1}{c|}{\multirow{2}[2]{*}{8.2442}} & \multicolumn{1}{c|}{\multirow{2}[2]{*}{8.1977}} & \multicolumn{1}{c|}{\multirow{2}[2]{*}{8.1977}} & \multicolumn{1}{c|}{\multirow{2}[2]{*}{8.2222}} & \multicolumn{1}{c|}{\multirow{2}[2]{*}{8.2222}}   \bigstrut[t]\\
        \multicolumn{1}{|c|}{} & (0.0031) & \multicolumn{1}{c|}{} & \multicolumn{1}{c|}{} & \multicolumn{1}{c|}{} & \multicolumn{1}{c|}{} & \multicolumn{1}{c|}{} \bigstrut[b]\\
        \hline
        \multicolumn{1}{|c|}{\multirow{2}[2]{*}{Basket 2}} & 16.4700 & \multicolumn{1}{c|}{\multirow{2}[2]{*}{16.6215}} & \multicolumn{1}{c|}{\multirow{2}[2]{*}{16.4424}} & \multicolumn{1}{c|}{\multirow{2}[2]{*}{16.4424}} & \multicolumn{1}{c|}{\multirow{2}[2]{*}{16.4631}} & \multicolumn{1}{c|}{\multirow{2}[2]{*}{16.3654}}   \bigstrut[t]\\
        \multicolumn{1}{|c|}{} & (0.0052) & \multicolumn{1}{c|}{} & \multicolumn{1}{c|}{} & \multicolumn{1}{c|}{} & \multicolumn{1}{c|}{} & \multicolumn{1}{c|}{} \bigstrut[b]\\
        \hline
        \multicolumn{1}{|c|}{\multirow{2}[2]{*}{Basket 3}} & 12.5887 & \multicolumn{1}{c|}{\multirow{2}[2]{*}{12.5911}} & \multicolumn{1}{c|}{\multirow{2}[2]{*}{12.5695}} & \multicolumn{1}{c|}{\multirow{2}[2]{*}{12.5695}} & \multicolumn{1}{c|}{\multirow{2}[2]{*}{12.5888}} & \multicolumn{1}{c|}{\multirow{2}[2]{*}{12.5888}}  \bigstrut[t]\\
        \multicolumn{1}{|c|}{} & (0.0005) & \multicolumn{1}{c|}{} & \multicolumn{1}{c|}{} & \multicolumn{1}{c|}{} & \multicolumn{1}{c|}{} & \multicolumn{1}{c|}{} \bigstrut[b]\\
        \hline
        \multicolumn{1}{|c|}{\multirow{2}[2]{*}{Basket 4}} & 1.1459 & \multicolumn{1}{c|}{\multirow{2}[2]{*}{1.1456}} & \multicolumn{1}{c|}{\multirow{2}[2]{*}{1.1453}} & \multicolumn{1}{c|}{\multirow{2}[2]{*}{1.1453}} & \multicolumn{1}{c|}{\multirow{2}[2]{*}{1.0938}} & \multicolumn{1}{c|}{\multirow{2}[2]{*}{1.1162}}  \bigstrut[t]\\
        \multicolumn{1}{|c|}{} & (0.0008) & \multicolumn{1}{c|}{} & \multicolumn{1}{c|}{} & \multicolumn{1}{c|}{} & \multicolumn{1}{c|}{} & \multicolumn{1}{c|}{} \bigstrut[b]\\
        \hline
        \multicolumn{1}{|c|}{\multirow{2}[2]{*}{Basket 5}} & 7.4681 & \multicolumn{1}{c|}{\multirow{2}[2]{*}{7.4951}} & \multicolumn{1}{c|}{\multirow{2}[2]{*}{7.4563}} & \multicolumn{1}{c|}{\multirow{2}[2]{*}{7.4563}} & \multicolumn{1}{c|}{\multirow{2}[2]{*}{7.4555}} & \multicolumn{1}{c|}{\multirow{2}[2]{*}{7.4555}}  \bigstrut[t]\\
        \multicolumn{1}{|c|}{} & (0.0027) & \multicolumn{1}{c|}{} & \multicolumn{1}{c|}{} & \multicolumn{1}{c|}{} & \multicolumn{1}{c|}{} & \multicolumn{1}{c|}{} \bigstrut[b]\\
        \hline
        \multicolumn{1}{|c|}{\multirow{2}[2]{*}{Basket 6}} & 9.7767 & \multicolumn{1}{c|}{\multirow{2}[2]{*}{9.7989}} & \multicolumn{1}{c|}{\multirow{2}[2]{*}{9.7628}} & \multicolumn{1}{c|}{\multirow{2}[2]{*}{9.7628}} & \multicolumn{1}{c|}{\multirow{2}[2]{*}{9.7856}} & \multicolumn{1}{c|}{\multirow{2}[2]{*}{9.7856}}  \bigstrut[t]\\
        \multicolumn{1}{|c|}{} & (0.0030) & \multicolumn{1}{c|}{} & \multicolumn{1}{c|}{} & \multicolumn{1}{c|}{} & \multicolumn{1}{c|}{} & \multicolumn{1}{c|}{} \bigstrut[b]\\
        \hline
        \COne &  -  & 33.33\% & 16.67\% & 16.67\% & \textbf{66.67\%} & 33.33\%  \bigstrut\\
                \hline
        \CThree &  -  & 0.0635 & \textbf{0.0195} & \textbf{0.0195} & 0.0224 & 0.0454  \bigstrut\\
        \hline

\end{tabular}
\label{tab:GBMresults}
\end{table}

\begin{landscape}
\begin{table}[htbp]
  \centering
  \footnotesize
\caption{Comparison I (Set 1): number of assets between 2 and 10.\\
 \footnotesize
This table contains the summary of the performances of several methods for pricing options in Set 1 with numbers of assets randomly generated between 2 and 10. The assets follow equation \eqref{assetSolutionShifted} where the parameters are randomly generated and uniformly distributed in the following ranges: $r\in[0;0.1]$, $\sigma_i\in[0.1;0.6]$, $T\in [0.1;1]$, $S_0^{(i)}=[70;130]$, $a_i\in[-1;1]$, $\frac{\Strike}{\Basket}\in[0.95;1.05]$, $\shift_0^{(i)}\ee^{rT}\in[-20;20]$, $\sign_i\in[-1;1]$, $\JIntensityQ_i\in[0;0.2]$, $\logJumpMeanQ_i\in[-0.3;0]$ and $\logJumpVolatilityQ_i\in[0;0.3]$. Three measures of error are reported: \COne-- the percentage of times the minimum squared error is reached under the specific method; \CTwo-- the  percentage of times the relative error is greater than 5\% for the specified method; \CThree-- the square root of MSE calculated only relative to the options for which the method was able to find a numerical solution.
The results are shown (per column) along three different dimensions: risk-free rate, time to maturity and strike price. Along the different rows, the results per  method are showed: in particular, $BPW$ stands for the method in \cite{BorovkovaPermanaWeide2007} and $mGA$ and $mGB$ are considered for both $m=4$ and $m=6$.}
\footnotesize
    \begin{tabular}{|r|r|r|r|r|r|r|r|r|r|r|}
\cline{3-9}    \multicolumn{1}{c}{} &       & \multicolumn{2}{c|}{$r$} & \multicolumn{2}{c|}{$T$} & \multicolumn{3}{c|}{$K$} & \multicolumn{1}{r}{} & \multicolumn{1}{r}{} \bigstrut\\
\cline{3-10}    \multicolumn{1}{r}{} &       & \multicolumn{1}{c|}{$r\leq 0.05$} & \multicolumn{1}{c|}{$r>0.05$} & \multicolumn{1}{c|}{$T\leq 0.5$} & \multicolumn{1}{c|}{$T>0.5$} & \multicolumn{1}{c|}{$\frac{K}{\Basket_0}\leq 0.98$} & \multicolumn{1}{c|}{$0.98< \frac{K}{\Basket_0} \leq 1.02$} & \multicolumn{1}{c|}{$\frac{K}{\Basket_0}>1.02$} & \multicolumn{1}{c|}{Total} & \multicolumn{1}{r}{} \bigstrut\\
    \hline
%    \multicolumn{1}{|c|}{\multirow{5}[10]{*}{$\CThree$}} &  $6GA$   & \textbf{0.1473} & \textbf{0.3152} & \textbf{0.2705} & 0.2066 & \textbf{0.1515} & \textbf{0.3388} & \textbf{0.1343} & \textbf{0.2403} &  $6GA$ \bigstrut\\
%\cline{2-11}    \multicolumn{1}{|c|}{} &  $6GB$   & 1.0542 & 0.6232 & 1.2379 & \textbf{0.1935} & 0.4134 & 0.9789 & 1.0427 & 0.8809 &  $6GB$ \bigstrut\\
%\cline{2-11}    \multicolumn{1}{|c|}{} & BPW   & 3.2804 & 1.4664 & 1.0812 & 3.4971 & 2.5979 & 2.2475 & 2.9837 & 2.6011 & BPW \bigstrut\\
%\cline{2-11}    \multicolumn{1}{|c|}{} & 4GA   & 5.3849 & 6.12  & 6.1527 & 5.3013 & 5.3943 & 5.9816 & 5.7251 & 5.7377 & 4GA \bigstrut\\
%\cline{2-11}    \multicolumn{1}{|c|}{} &  $4Gb$   & 2.9189 & 2.309 & 2.0342 & 3.1423 & 1.4004 & 3.8289 & 1.4556 & 2.6534 &  $4GB$ \bigstrut\\
%    \hline
    \multicolumn{1}{|c|}{\multirow{5}[9]{*}{\COne}} &  $6GA$   & 42.54\% & \textbf{48.71\%} & 39.68\% & \textbf{50.99\%} & \textbf{47.18\%} & 44.95\% & 44.38\% & 45.40\% &  $6GA$ \bigstrut\\
\cline{2-11}    \multicolumn{1}{|c|}{} &  $6GB$   & 38.43\% & 41.81\% & 32.79\% & 47.04\% & 40.14\% & 42.93\% & 36.25\% & 40.00\% &  $6GB$ \bigstrut\\
\cline{2-11}    \multicolumn{1}{|c|}{} & BPW   & 15.30\% & 12.07\% & 20.24\% & 7.51\% & 17.61\% & 13.13\% & 11.25\% & 13.80\% & BPW \bigstrut\\
\cline{2-11}    \multicolumn{1}{|c|}{} &  $4GA$   & 46.64\% & 43.97\% & 51.82\% & 39.13\% & 44.37\% & 42.42\% & 50.00\% & 45.40\% &  $4GA$ \bigstrut\\
\cline{2-11}    \multicolumn{1}{|c|}{} &  $4GB$   & \textbf{50.00\%} & 46.55\% & \textbf{56.68\%} & 40.32\% & 43.66\% & \textbf{49.49\%} & \textbf{51.25\%} & \textbf{48.40\%} &  $4GB$ \bigstrut[t]\\
\hline
    \multicolumn{1}{|c|}{\multirow{5}[10]{*}{\CTwo}} &  $6GA$   & \textbf{0.00\%} & \textbf{0.86\%} & \textbf{0.40\%} & \textbf{0.40\%} & \textbf{0.00\%} & \textbf{1.01\%} & \textbf{0.00\%} & \textbf{0.40\%} &  $6GA$ \bigstrut\\
\cline{2-11}    \multicolumn{1}{|c|}{} &  $6GB$   & 1.12\% & 1.72\% & 2.43\% & \textbf{0.40\%} & 0.70\% & 2.53\% & 0.63\% & 1.40\% &  $6GB$ \bigstrut\\
\cline{2-11}    \multicolumn{1}{|c|}{} & BPW   & 11.57\% & 8.19\% & 2.43\% & 17.39\% & 9.15\% & 9.09\% & 11.88\% & 10.00\% & BPW \bigstrut\\
\cline{2-11}    \multicolumn{1}{|c|}{} &  $4GA$   & 9.33\% & 10.34\% & 16.19\% & 3.56\% & 8.45\% & 13.13\% & 6.88\% & 9.80\% &  $4GA$ \bigstrut\\
\cline{2-11}    \multicolumn{1}{|c|}{} &  $4GB$   & 5.60\% & 3.88\% & 6.48\% & 3.16\% & 3.52\% & 5.56\% & 5.00\% & 4.80\% &  $4GB$ \bigstrut\\
    \hline
 \multicolumn{1}{|c|}{\multirow{5}[10]{*}{\CThree}}           &  $6GA$   & 0.1473 & 0.1632 & 0.1057 & 0.191 & 0.1515 & 0.1719 & 0.1343 & 0.1549 &  $6GA$ \bigstrut\\
 \cline{2-11}    \multicolumn{1}{|c|}{}  &  $6GB$   & 0.1356 & 0.1517 & 0.0977 & 0.1768 & 0.1475 & 0.1501 & 0.1304 & 0.1433 &  $6GB$ \bigstrut\\
\cline{2-11}    \multicolumn{1}{|c|}{} & BPW   & 0.2528 & 0.2627 & 0.1553 & 0.3278 & 0.2378 & 0.2608 & 0.2696 & 0.2574 & BPW \bigstrut\\
\cline{2-11}    \multicolumn{1}{|c|}{}          &  $4GA$   & \textbf{0.1159} & \textbf{0.1183} & \textbf{0.036} & 0.1607 & \textbf{0.1243} & \textbf{0.1079} & \textbf{0.1212} & \textbf{0.1171} &  $4GA$ \bigstrut\\
\cline{2-11}    \multicolumn{1}{|c|}{}          &  $4GB$   & 0.13  & 0.1231 & 0.0813 & \textbf{0.1592} & 0.1275 & 0.1139 & 0.1406 & 0.1268 &  $4GB$ \bigstrut\\
    \hline
    \multicolumn{2}{|c|}{\# options} & 268   & 232   & 247   & 253   & 142   & 198   & 160   & 500   & \multicolumn{1}{r}{} \bigstrut[b]\\
\cline{1-10}    \end{tabular}%
  \label{tab:randomOptions2-10}%
\end{table}
\end{landscape}%

\begin{landscape}
\begin{table}[htbp]
  \centering
   \footnotesize
\caption{Comparison II (Set 1): number of assets between 11 and 15.\\\footnotesize
This table contains the summary of the performances of several methods for pricing options in Set 1 with numbers of assets randomly generated between 11 and 15. The assets follow equation \eqref{assetSolutionShifted}. The assets follow equation \eqref{assetSolutionShifted} where the parameters are randomly generated and uniformly distributed in the following ranges: $r\in[0;0.1]$, $\sigma_i\in[0.1;0.6]$, $T\in [0.1;1]$, $S_0^{(i)}=[70;130]$, $a_i\in[-1;1]$, $\frac{\Strike}{\Basket}\in[0.95;1.05]$, $\shift_0^{(i)}\ee^{rT}\in[-20;20]$, $\sign_i\in[-1;1]$, $\JIntensityQ_i\in[0;0.2]$, $\logJumpMeanQ_i\in[-0.3;0]$ and $\logJumpVolatilityQ_i\in[0;0.3]$. } % where the parameters are randomly generated and uniformly distributed in the following ranges: $r\in[0;0.1]$, $\sigma_i\in[0.1;0.6]$, $T\in [0.1;1]$, $S_0^{(i)}=[70;130]$, $a_i\in[-1;1]$, $\frac{\Strike}{\Basket}\in[0.95;1.05]$, $\shift_0^{(i)}\ee^{rT}\in[-20;20]$, $\sign_i\in[-1;1]$, $\JIntensityQ_i\in[0;0.2]$, $\logJumpMeanQ_i\in[-0.3;0]$ and $\logJumpVolatilityQ_i\in[0;0.3]$. Three measures of error are reported: \COne-- the percentage of times the square error is greater than 1; \CTwo-- the  percentage of times the square error is greater than 1 for the specified method; \CThree-- the square root of MSE calculated only relative to the options for which the method was able to find a numerical solution.
%The results are showed (per column) along three different dimensions: risk-free rate, time to maturity and strike price. Along the different rows, the results per  method are showed: in particular, BPW stands for the method in \cite{BorovkovaPermanaWeide2007} and $mGA$ and $mGB$ are considered for $m=4$.}
    \begin{tabular}{|r|r|r|r|r|r|r|r|r|r|r|}
\cline{3-9}    \multicolumn{1}{r}{} &       & \multicolumn{2}{c|}{$r$} & \multicolumn{2}{c|}{$T$} & \multicolumn{3}{c|}{$K$} & \multicolumn{1}{r}{} & \multicolumn{1}{r}{} \bigstrut\\
\cline{3-10}    \multicolumn{1}{r}{} &       & \multicolumn{1}{c|}{$r\leq 0.05$} & \multicolumn{1}{c|}{$r>0.05$} & \multicolumn{1}{c|}{$T\leq 0.5$} & \multicolumn{1}{c|}{$T>0.5$} & \multicolumn{1}{c|}{$\frac{K}{\Basket_0}\leq 0.98$} & \multicolumn{1}{c|}{$0.98 < \frac{K}{\Basket_0}\leq 1.02$} & \multicolumn{1}{c|}{$\frac{K}{\Basket_0} >1.02 $} & \multicolumn{1}{c|}{Total} & \multicolumn{1}{r}{} \bigstrut\\
    \hline
%    \multicolumn{1}{|c|}{\multirow{3}[6]{*}{$\CThree$}} & BPW   & 1.7375 & 4.2654 & 1.2691 & 4.3876 & 2.8762 & 3.7835 & 3.0549 & 3.3106 & BPW \bigstrut\\
%\cline{2-11}    \multicolumn{1}{|c|}{} &  $4GA$   & 7.9451 & 12.5597 & 11.7548 & 9.487 & 9.7526 & 12.2237 & 9.0586 & 10.6134 &  $4GA$ \bigstrut\\
%\cline{2-11}    \multicolumn{1}{|c|}{} &  $4GB$   & 3.5851 & 6.3129 & 4.8169 & 5.5077 & 7.7241 & 3.1689 & 3.8462 & 5.1945 &  $4GB$ \bigstrut\\
%    \hline
\multicolumn{1}{|c|}{\multirow{3}[6]{*}{\COne}} & BPW   & 11.89\% & 15.29\% & 20.57\% & 7.55\% & 13.83\% & 16.10\% & 10.23\% & 13.67\% & BPW \bigstrut\\
\cline{2-11}    \multicolumn{1}{|c|}{} &  $4GA$   & 81.82\% & 75.80\% & 66.67\% & 89.31\% & 82.98\% & 72.88\% & 81.82\% & 78.67\% &  $4GA$ \bigstrut\\
\cline{2-11}    \multicolumn{1}{|c|}{} &  $4GB$   & \textbf{86.01\%} & \textbf{84.08\%} & \textbf{78.01\%} & \textbf{91.19\%} & \textbf{85.11\%} & \textbf{84.75\%} & \textbf{85.23\%} &\textbf{ 85.00\%} &  $4GB$ \bigstrut\\
    \hline
    \multicolumn{1}{|c|}{\multirow{3}[6]{*}{\CTwo}} & BPW   & 12.59\% & 15.92\% & \textbf{3.55\% }& 23.90\% & 14.89\% & 11.02\% & 18.18\% & 14.33\% & BPW \bigstrut\\
\cline{2-11}    \multicolumn{1}{|c|}{} &  $4GA$   & 11.19\% & 14.01\% & 22.70\% & 3.77\% & 7.45\% & 18.64\% & 10.23\% & 12.67\% &  $4GA$ \bigstrut\\
\cline{2-11}    \multicolumn{1}{|c|}{} &  $4GB$   & \textbf{5.59\%} & \textbf{3.82\%} & 7.80\% & \textbf{1.89\%} & \textbf{5.32\%} & \textbf{3.39\%} & \textbf{5.68\%} & \textbf{4.67\%} &  $4GB$ \bigstrut\\
    \hline
    \multicolumn{1}{|c|}{\multirow{3}[6]{*}{\CThree}} & BPW   & 0.3703 & 0.3778 & 0.1922 & 0.4811 & 0.3718 & 0.3882 & 0.3573 & 0.3742 & BPW \bigstrut\\
\cline{2-11}    \multicolumn{1}{|c|}{} &  $4GA$   & \textbf{0.1025} & \textbf{0.1401} & \textbf{0.0385} & \textbf{0.1659} & \textbf{0.1266} & \textbf{0.1001} & \textbf{0.1468} & \textbf{0.1236} &  $4GA$ \bigstrut\\
\cline{2-11}    \multicolumn{1}{|c|}{}          &  $4GB$   & 0.1138 & 0.1556 & 0.0637 & 0.1788 & \textbf{0.1266} & 0.1145 & 0.1719 & 0.1373 &  $4GB$ \bigstrut\\
    \hline
    \multicolumn{2}{|c|}{\# options} & 143   & 157   & 141   & 159   & 94    & 118   & 88    & 300   & \multicolumn{1}{r}{} \bigstrut\\
\cline{1-10}    \end{tabular}%
\label{tab:randomOptions11-15}%
\end{table}
\end{landscape}%

\begin{landscape}
\begin{table}[htbp]
  \centering
   \footnotesize
   \caption{Comparison III (Set 1): number of assets between 16 and 20.\\
   \footnotesize
   This table contains the summary of the performances of several methods for pricing options in Set 1 with numbers of assets randomly generated between 16 and 20.  The assets follow equation \eqref{assetSolutionShifted}. The assets follow equation \eqref{assetSolutionShifted} where the parameters are randomly generated and uniformly distributed in the following ranges: $r\in[0;0.1]$, $\sigma_i\in[0.1;0.6]$, $T\in [0.1;1]$, $S_0^{(i)}=[70;130]$, $a_i\in[-1;1]$, $\frac{\Strike}{\Basket}\in[0.95;1.05]$, $\shift_0^{(i)}\ee^{rT}\in[-20;20]$, $\sign_i\in[-1;1]$, $\JIntensityQ_i\in[0;0.2]$, $\logJumpMeanQ_i\in[-0.3;0]$ and $\logJumpVolatilityQ_i\in[0;0.3]$. } %The assets follow equation \eqref{assetSolutionShifted} where the parameters are randomly generated and uniformly distributed in the following ranges: $r\in[0;0.1]$, $\sigma_i\in[0.1;0.6]$, $T\in [0.1;1]$, $S_0^{(i)}=[70;130]$, $a_i\in[-1;1]$, $\frac{\Strike}{\Basket}\in[0.95;1.05]$, $\shift_0^{(i)}\ee^{rT}\in[-20;20]$, $\sign_i\in[-1;1]$, $\JIntensityQ_i\in[0;0.2]$, $\logJumpMeanQ_i\in[-0.3;0]$ and $\logJumpVolatilityQ_i\in[0;0.3]$. Three measures of error are reported: \COne-- the percentage of times the square error is greater than 1; \CTwo-- the  percentage of times the square error is greater than 1 for the specified method; \CThree-- the square root of MSE calculated only relative to the options for which the method was able to find a numerical solution.
  % The results are showed (per column) along three different dimensions: risk-free rate, time to maturity and strike price. Along the different rows, the results per  method are showed: in particular, BPW stands for the method in \cite{BorovkovaPermanaWeide2007} and $mGA$ and $mGB$ are considered for $m=4$.}
    \begin{tabular}{|r|r|r|r|r|r|r|r|r|r|r|}
\cline{3-9}    \multicolumn{1}{r}{} &       & \multicolumn{2}{c|}{$r$} & \multicolumn{2}{c|}{$T$} & \multicolumn{3}{c|}{$K$} & \multicolumn{1}{r}{} & \multicolumn{1}{r}{} \bigstrut\\
\cline{3-10}    \multicolumn{1}{r}{} &       & \multicolumn{1}{c|}{$r\leq 0.05$} & \multicolumn{1}{c|}{$r>0.05$} & \multicolumn{1}{c|}{$T\leq 0.5$} & \multicolumn{1}{c|}{$T>0.5$} & \multicolumn{1}{c|}{$\frac{K}{\Basket_0}\leq 0.98$} & \multicolumn{1}{c|}{$0.98 < \frac{K}{\Basket_0}\leq 1.02$} & \multicolumn{1}{c|}{$\frac{K}{\Basket_0} >1.02 $} & \multicolumn{1}{c|}{Total} & \multicolumn{1}{r}{} \bigstrut\\
    \hline
%    \multicolumn{1}{|c|}{\multirow{3}[6]{*}{$\CThree$}} & BPW   & 7.701 & 6.2622 & 3.1674 & 8.9086 & 8.9149 & 1.2136 & 7.6792 & 6.9899 & BPW \bigstrut\\
%\cline{2-11}    \multicolumn{1}{|c|}{} &  $4GA$   & 2.5632 & 7.9823 & 9.0797 & 0.1483 & 6.2806 & 7.6673 & 4.1186 & 6.0238 &  $4GA$ \bigstrut\\
%\cline{2-11}    \multicolumn{1}{|c|}{} &  $4GB$   & 0.2375 & 4.2988 & 4.6515 & 0.4562 & 4.155 & 0.0887 & 3.2462 & 3.1043 &  $4GB$ \bigstrut\\
%    \hline
  \multicolumn{1}{|c|}{\multirow{3}[6]{*}{\COne}} & BPW   & 12.50\% & 15.38\% & 25.00\% & 5.36\% & 12.50\% & 13.79\% & 15.38\% & 14.00\% & BPW \bigstrut\\
\cline{2-11}    \multicolumn{1}{|c|}{} &  $4GA$   & \textbf{85.42\%} & \textbf{82.69\%} & 68.18\% & \textbf{96.43\%} & \textbf{87.50\%} & 82.76\% & \textbf{82.05\% }& \textbf{84.00\%} &  $4GA$ \bigstrut\\
\cline{2-11}    \multicolumn{1}{|c|}{} &  $4GB$   & 83.33\% & \textbf{82.69\% }& \textbf{70.45\%} & 92.86\% & \textbf{87.50\%} & \textbf{86.21\% }& 76.92\% & 83.00\% &  $4GB$ \bigstrut\\
    \hline
    \multicolumn{1}{|c|}{\multirow{3}[6]{*}{\CTwo}} & BPW   & 14.58\% & 23.08\% & 4.55\% & 30.36\% & 31.25\% & 6.90\% & 17.95\% & 19.00\% & BPW \bigstrut\\
\cline{2-11}    \multicolumn{1}{|c|}{} &  $4GA$   &\textbf{ 2.08\% }& \textbf{7.69\%} & 11.36\% & \textbf{0.00\%} & 6.25\% & 3.45\% &\textbf{ 5.13\%} & \textbf{5.00\%} &  $4GA$ \bigstrut\\
\cline{2-11}    \multicolumn{1}{|c|}{} &  $4GB$   &\textbf{ 2.08\%} & \textbf{7.69\%} & \textbf{6.82\% }& 3.57\% & \textbf{3.13\%} & \textbf{0.00\%} & 10.26\% & \textbf{5.00\%} &  $4GB$ \bigstrut\\
    \hline
    \multicolumn{1}{|c|}{\multirow{3}[6]{*}{\CThree}} & BPW   & 0.2717 & 0.3638 & 0.2135 & 0.3877 & 0.2974 & 0.3462 & 0.325 & 0.3229 & BPW \bigstrut\\
\cline{2-11}    \multicolumn{1}{|c|}{} &  $4GA$   & \textbf{0.1136} & 0.1186 & \textbf{0.0521} & 0.1483 & \textbf{0.1466} & \textbf{0.0886} & 0.1056 & \textbf{0.1162} &  $4GA$ \bigstrut\\
\cline{2-11}    \multicolumn{1}{|c|}{}      &  $4GB$   & 0.1621 &\textbf{ 0.1166} & 0.1303 &\textbf{ 0.1476} & 0.2039 & 0.0887 & \textbf{0.1023} & 0.1403 &  $4GB$ \bigstrut\\
    \hline
    \multicolumn{2}{|c|}{\# options} & 48    & 52    & 44    & 56    & 32    & 29    & 39    & 100   & \multicolumn{1}{r}{} \bigstrut\\
\cline{1-10}    \end{tabular}%
\label{tab:randomOptions16-20}%
\end{table}
\end{landscape}%

\begin{landscape}
\begin{table}[htbp]
  \centering
   \footnotesize

   \caption{Comparison IV (Set 1): number of assets between 31 and 50.\\ \footnotesize
This table contains the summary of the performances of several methods for pricing options in Set 1 with numbers of assets randomly generated between 31 and 50.  The assets follow equation \eqref{assetSolutionShifted}. The assets follow equation \eqref{assetSolutionShifted} where the parameters are randomly generated and uniformly distributed in the following ranges: $r\in[0;0.1]$, $\sigma_i\in[0.1;0.6]$, $T\in [0.1;1]$, $S_0^{(i)}=[70;130]$, $a_i\in[-1;1]$, $\frac{\Strike}{\Basket}\in[0.95;1.05]$, $\shift_0^{(i)}\ee^{rT}\in[-20;20]$, $\sign_i\in[-1;1]$, $\JIntensityQ_i\in[0;0.2]$, $\logJumpMeanQ_i\in[-0.3;0]$ and $\logJumpVolatilityQ_i\in[0;0.3]$. } %The assets follow equation \eqref{assetSolutionShifted} where the parameters are randomly generated and uniformly distributed in the following ranges: $r\in[0;0.1]$, $\sigma_i\in[0.1;0.6]$, $T\in [0.1;1]$, $S_0^{(i)}=[70;130]$, $a_i\in[-1;1]$, $\frac{\Strike}{\Basket}\in[0.95;1.05]$, $\shift_0^{(i)}\ee^{rT}\in[-20;20]$, $\sign_i\in[-1;1]$, $\JIntensityQ_i\in[0;0.2]$, $\logJumpMeanQ_i\in[-0.3;0]$ and $\logJumpVolatilityQ_i\in[0;0.3]$. Three measures of error are reported: \COne-- the percentage of times the square error is greater than 1; \CTwo-- the  percentage of times the square error is greater than 1 for the specified method; \CThree-- the square root of MSE calculated only relative to the options for which the method was able to find a numerical solution.
%The results are showed (per column) along three different dimensions: risk-free rate, time to maturity and strike price. Along the different rows, the results per  method are showed: in particular, BPW stands for the method in \cite{BorovkovaPermanaWeide2007} and $mGA$ and $mGB$ are considered for $m=4$.}
    \begin{tabular}{|r|r|r|r|r|r|r|r|r|r|r|}
\cline{3-9}    \multicolumn{1}{r}{} &       & \multicolumn{2}{c|}{$r$} & \multicolumn{2}{c|}{$T$} & \multicolumn{3}{c|}{$K$} & \multicolumn{1}{r}{} & \multicolumn{1}{r}{} \bigstrut\\
\cline{3-10}    \multicolumn{1}{r}{} &       & \multicolumn{1}{c|}{$r\leq 0.05$} & \multicolumn{1}{c|}{$r>0.05$} & \multicolumn{1}{c|}{$T\leq 0.5$} & \multicolumn{1}{c|}{$T>0.5$} & \multicolumn{1}{c|}{$\frac{K}{\Basket_0}\leq 0.98$} & \multicolumn{1}{c|}{$0.98 < \frac{K}{\Basket_0}\leq 1.02$} & \multicolumn{1}{c|}{$\frac{K}{\Basket_0} >1.02 $} & \multicolumn{1}{c|}{Total} & \multicolumn{1}{r}{} \bigstrut\\
    \hline
%    \multicolumn{1}{|c|}{\multirow{3}[6]{*}{$\CThree$}} & BPW   & 1.6843 & 5.8209 & 0.908 & 6.1647 & 2.9922 & 6.1583 & 1.0236 & 4.3209 & BPW \bigstrut\\
%\cline{2-11}    \multicolumn{1}{|c|}{} &  $4GA$   & 0.6266 & 4.5564 & 4.5511 & 0.1447 & 0.8379 & 5.0197 & 0.1622 & 3.2834 &  $4GA$ \bigstrut\\
%\cline{2-11}    \multicolumn{1}{|c|}{} &  $4GB$   & 4.1944 & 0.122 & 4.0711 & 0.1447 & 5.6496 & 0.0734 & 0.1622 & 2.9374 &  $4GB$ \bigstrut\\
%    \hline
 \multicolumn{1}{|c|}{\multirow{3}[6]{*}{\COne}} & BPW   & 12.24\% & 5.88\% & 15.38\% & 2.08\% & 14.81\% & 2.38\% & 12.90\% & 9.00\% & BPW \bigstrut\\
\cline{2-11}    \multicolumn{1}{|c|}{} &  $4GA$   & \textbf{85.71\%} & 90.20\% & 78.85\% & \textbf{97.92\%} & \textbf{85.19\%} & 92.86\% & 83.87\% & 88.00\% &  $4GA$ \bigstrut\\
\cline{2-11}    \multicolumn{1}{|c|}{} &  $4GB$   & \textbf{85.71\%} &\textbf{ 94.12\%} & \textbf{82.69\%} & \textbf{97.92\%} & 81.48\% & \textbf{97.62\%} & \textbf{87.10\%} & \textbf{90.00\%} &  $4GB$ \bigstrut\\
    \hline
    \multicolumn{1}{|c|}{\multirow{3}[6]{*}{\CTwo}} & BPW   & 16.33\% & 35.29\% & 13.46\% & 39.58\% & 29.63\% & 19.05\% & 32.26\% & 26.00\% & BPW \bigstrut\\
\cline{2-11}    \multicolumn{1}{|c|}{} &  $4GA$   &\textbf{ 2.04\%} & 3.92\% & 5.77\% & \textbf{0.00\%} & \textbf{3.70\%} & 4.76\% & \textbf{0.00\%} & 3.00\% &  $4GA$ \bigstrut\\
\cline{2-11}    \multicolumn{1}{|c|}{} &  $4GB$   & \textbf{2.04\%} &\textbf{ 0.00\%} & \textbf{1.92\%} & \textbf{0.00\%} & \textbf{3.70\%} & \textbf{0.00\%} & \textbf{0.00\%} & \textbf{1.00\%} &  $4GB$ \bigstrut\\
    \hline
    \multicolumn{1}{|c|}{\multirow{3}[6]{*}{\CThree}} & BPW   & 0.2973 & 0.373 & 0.2742 & 0.3957 & 0.3311 & 0.3751 & 0.2872 & 0.338 & BPW \bigstrut\\
\cline{2-11}    \multicolumn{1}{|c|}{} &  $4GA$   & 0.1075 & \textbf{0.122 }& 0.0784 & \textbf{0.1447} & 0.1025 & \textbf{0.0732} &\textbf{ 0.1622} & 0.1151 &  $4GA$ \bigstrut\\
\cline{2-11}    \multicolumn{1}{|c|}{}           &  $4GB$   & \textbf{0.1072} & \textbf{0.122} & \textbf{0.0782} & \textbf{0.1447} & \textbf{0.102} & \textbf{0.0734 }& \textbf{0.1622} & \textbf{0.115} &  $4GB$ \bigstrut\\
    \hline
    \multicolumn{2}{|c|}{\# options} & 49    & 51    & 52    & 48    & 27    & 42    & 31    & 100   & \multicolumn{1}{r}{} \bigstrut\\
\cline{1-10}    \end{tabular}%
\label{tab:randomOptions21-50}%
\end{table}
\end{landscape}%

\begin{landscape}
\begin{table}[htbp]
  \centering
  \footnotesize
  \caption{Comparison V (Set 1): Total summary.\\ \footnotesize
  This table contains the summary of the performances of several methods for pricing options in Set 1.  The assets follow equation \eqref{assetSolutionShifted}. The assets follow equation \eqref{assetSolutionShifted} where the parameters are randomly generated and uniformly distributed in the following ranges: $r\in[0;0.1]$, $\sigma_i\in[0.1;0.6]$, $T\in [0.1;1]$, $S_0^{(i)}=[70;130]$, $a_i\in[-1;1]$, $\frac{\Strike}{\Basket}\in[0.95;1.05]$, $\shift_0^{(i)}\ee^{rT}\in[-20;20]$, $\sign_i\in[-1;1]$, $\JIntensityQ_i\in[0;0.2]$, $\logJumpMeanQ_i\in[-0.3;0]$ and $\logJumpVolatilityQ_i\in[0;0.3]$. } % The assets follow equation \eqref{assetSolutionShifted} where the parameters are randomly generated and uniformly distributed in the following ranges: $r\in[0;0.1]$, $\sigma_i\in[0.1;0.6]$, $T\in [0.1;1]$, $S_0^{(i)}=[70;130]$, $a_i\in[-1;1]$, $\frac{\Strike}{\Basket}\in[0.95;1.05]$, $\shift_0^{(i)}\ee^{rT}\in[-20;20]$, $\sign_i\in[-1;1]$, $\JIntensityQ_i\in[0;0.2]$, $\logJumpMeanQ_i\in[-0.3;0]$ and $\logJumpVolatilityQ_i\in[0;0.3]$. Three measures of error are reported: \COne-- the percentage of times the square error is greater than 1; \CTwo-- the  percentage of times the square error is greater than 1 for the specified method; \CThree-- the square root of MSE calculated only relative to the options for which the method was able to find a numerical solution.
%  The results are showed (per column) along three different dimensions: risk-free rate, time to maturity and strike price. Along the different rows, the results per  method are showed: in particular, BPW stands for the method in \cite{BorovkovaPermanaWeide2007} and $mGA$ and $mGB$ are considered for $m=4$.}

    \begin{tabular}{|r|r|r|r|r|r|r|r|r|r|r|}
\cline{3-9}    \multicolumn{1}{r}{} &  & \multicolumn{2}{c|}{$r$} & \multicolumn{2}{c|}{$T$} & \multicolumn{3}{c|}{$K$} & \multicolumn{1}{r}{} & \multicolumn{1}{r}{} \bigstrut\\
\cline{3-10}    \multicolumn{1}{r}{} &  & \multicolumn{1}{c|}{$r\leq 0.05$} & \multicolumn{1}{c|}{$r>0.05$} & \multicolumn{1}{c|}{$T\leq 0.5$} & \multicolumn{1}{c|}{$T>0.5$} & \multicolumn{1}{c|}{$\frac{K}{\Basket_0}\leq 0.98$} & \multicolumn{1}{c|}{$0.98 < \frac{K}{\Basket_0}\leq 1.02$} & \multicolumn{1}{c|}{$\frac{K}{\Basket_0} >1.02 $} & \multicolumn{1}{c|}{Total} & \multicolumn{1}{r}{} \bigstrut\\
    \hline
%    \multicolumn{1}{|c|}{\multirow{3}[6]{*}{$\CThree$}} & BPW & 3.5193 & 3.3222 & 1.4703 & 4.5214 & 3.7554 & 2.7538 & 3.6401 & 3.4625 & BPW \bigstrut\\
%\cline{2-11}    \multicolumn{1}{|c|}{} &  $4GA$ & 5.8443 & 8.5369 & 8.3114 & 6.4085 & 6.8638 & 8.3124 & 6.5359 & 7.3628 &  $4GA$ \bigstrut\\
%\cline{2-11}    \multicolumn{1}{|c|}{} &  $4GB$ & 2.9225 & 4.0598 & 3.406 & 3.7514 & 4.6011 & 3.2173 & 2.5627 & 3.5878 &  $4GB$ \bigstrut\\
%    \hline
    \multicolumn{1}{|c|}{\multirow{3}[6]{*}{\COne}} & BPW & 13.78\% & 12.80\% & 20.25\% & 6.78\% & 15.59\% & 12.92\% & 11.64\% & 13.30\% & BPW \bigstrut\\
\cline{2-11}    \multicolumn{1}{|c|}{} &  $4GA$ & 63.98\% & 63.01\% & 60.54\% & 66.28\% & \textbf{65.08\%} & 60.21\% & 66.04\% & 63.50\% &  $4GA$ \bigstrut\\
\cline{2-11}    \multicolumn{1}{|c|}{} &  $4GB$ & \textbf{66.73\%} & \textbf{67.28\% }& \textbf{66.94\%} & \textbf{67.05\%} & \textbf{65.08\%} & \textbf{68.22\%} & \textbf{67.30\%} & \textbf{67.00\%} &  $4GB$ \bigstrut\\
    \hline
    \multicolumn{1}{|c|}{\multirow{3}[6]{*}{\CTwo}} & BPW & 12.60\% & 15.04\% & \textbf{4.13\%} & 22.87\% & 15.25\% & 10.59\% & 16.35\% & 13.80\% & BPW \bigstrut\\
\cline{2-11}    \multicolumn{1}{|c|}{} &  $4GA$ & 8.46\% & 10.57\% & 16.53\% & 2.91\% & 7.46\% & 13.18\% & 6.92\% & 9.50\% &  $4GA$ \bigstrut\\
\cline{2-11}    \multicolumn{1}{|c|}{} &  $4GB$ & \textbf{4.92\%} & \textbf{3.86\%} & 6.40\% &\textbf{ 2.52\%} & \textbf{4.07\%} & \textbf{3.88\%} & \textbf{5.35\%} & \textbf{4.40\%} &  $4GB$ \bigstrut\\
    \hline
 \multicolumn{1}{|c|}{\multirow{3}[6]{*}{\CThree}}& BPW & 0.332 & 0.3576 & 0.2348 & 0.4216 & 0.3342 & 0.3588 & 0.3375 & 0.3455 & BPW \bigstrut\\
\cline{2-11}    \multicolumn{1}{|c|}{} &  $4GA$ & \textbf{0.148 }& \textbf{0.1628} & \textbf{0.0959} & \textbf{0.1945} & \textbf{0.1579} & \textbf{0.1301} & \textbf{0.1765 }& \textbf{0.1559} &  $4GA$ \bigstrut\\
\cline{2-11}    \multicolumn{1}{|c|}{}   &  $4GB$ & 0.1603 & 0.1685 & 0.1185 & 0.1973 & 0.1652 & 0.1362 & 0.1898 & 0.1648 &  $4GB$ \bigstrut\\
    \hline
    \multicolumn{2}{|c|}{\# options} & 508 & 492 & 484 & 516 & 295 & 387 & 318 & 1000 & \multicolumn{1}{r}{} \bigstrut\\
\cline{1-10}    \end{tabular}%
  \label{tab:randomOptionsTotal}%
\end{table}
\end{landscape}%

\begin{landscape}
\begin{table}[htbp]
\footnotesize
  \centering
  \caption{Comparison (Set 2): Total summary.\\\footnotesize
  This table contains the summary of the performances of several methods for pricing options in Set 2. The assets follow equation \eqref{assetSolutionShifted} where the parameters are randomly generated and uniformly distributed in the following ranges: $r\in[0;0.1]$, $\sigma_i\in[0.1;0.6]$, $T\in [0.1;1]$, $S_0^{(i)}=[70;130]$, $a_i\in[-1;1]$, $\frac{\Strike}{\Basket}\in[0.95;1.05]$, $\shift_0^{(i)}\ee^{rT}\in[-20;20]$, $\sign_i\in[-1;1]$, $\JIntensityQ_i\in[0;0.2]$, $\logJumpMeanQ_i\in[-0.3;0.3]$ and $\logJumpVolatilityQ_i\in[0;0.3]$. 
  The results are showed (per column) along three different dimensions: risk-free rate, time to maturity and strike price. Along the different rows, the results per  method are showed: in particular, $BPW$ stands for the method in \cite{BorovkovaPermanaWeide2007}, $mGA$ and $mGB$ are considered for $m=4$ and $4GAB$ is a combination of $4GA$ and $4GB$. $4GAB$ returns the solution of the method
  that matches correctly the moments if only one of $4GA$ and $4GB$ works properly. The comparison for $4GAB$ is carried considering the error of the method that matches the
  moment if only one between $4GA$ and $4GB$ finds a solution or the worst error if both find a solution.}

    \begin{tabular}{|r|r|r|r|r|r|r|r|r|r|r|}
\cline{3-9}    \multicolumn{1}{r}{} &  & \multicolumn{2}{c|}{$r$} & \multicolumn{2}{c|}{$T$} & \multicolumn{3}{c|}{$K$} & \multicolumn{1}{r}{} & \multicolumn{1}{r}{} \bigstrut\\
\cline{3-10}    \multicolumn{1}{r}{} &  & \multicolumn{1}{c|}{$r\leq 0.05$} & \multicolumn{1}{c|}{$r>0.05$} & \multicolumn{1}{c|}{$T\leq 0.5$} & \multicolumn{1}{c|}{$T>0.5$} & \multicolumn{1}{c|}{$\frac{K}{\Basket_0}\leq 0.98$} & \multicolumn{1}{c|}{$0.98 < \frac{K}{\Basket_0}\leq 1.02$} & \multicolumn{1}{c|}{$\frac{K}{\Basket_0} >1.02 $} & \multicolumn{1}{c|}{Total} & \multicolumn{1}{r}{} \bigstrut\\
    \hline
        \multicolumn{1}{|c|}{\multirow{4}[8]{*}{\COne}} & BPW & 17.95\% & 17.44\% & 22.72\% & 12.37\% & 17.16\% & 16.83\% & 19.45\% & 17.70\% & BPW \bigstrut\\
    \cline{2-11}    \multicolumn{1}{|c|}{} &  $4GA$ & 78.70\% & 75.66\% & 68.74\% & 86.19\% & 78.22\% & 78.22\% & 74.74\% & 77.20\% &  $4GA$ \bigstrut\\
    \cline{2-11}    \multicolumn{1}{|c|}{} &  $4GB$ & 76.13\% & 77.89\% & 71.46\% & 82.89\% & 75.58\% & 78.22\% & 76.79\% & 77.00\% &  $4GB$ \bigstrut\\
    \cline{2-11}    \multicolumn{1}{|c|}{} &   $4GAB$ & \textbf{83.04}\% & \textbf{84.79\%} & \textbf{79.02\%} & \textbf{89.07}\% &\textbf{ 84.16}\% & \textbf{84.65\%} & \textbf{82.59\%} & \textbf{84.00\%} &  $4GAB$ \bigstrut\\
        \hline
    \multicolumn{1}{|c|}{\multirow{4}[8]{*}{\CTwo}} & BPW & 19.72\% & 24.14\% & 7.96\% & 36.70\% & 29.04\% & 17.82\% & 20.14\% & 21.90\% & BPW \bigstrut\\
\cline{2-11}    \multicolumn{1}{|c|}{} &  $4GA$ & 9.27\% & 13.18\% & 16.31\% & 5.77\% & 11.22\% & 10.15\% & 12.63\% & 11.20\% &  $4GA$ \bigstrut\\
\cline{2-11}    \multicolumn{1}{|c|}{} &  $4GB$ & 12.03\% & 11.36\% & 14.37\% & 8.87\% & 13.20\% & 11.88\% & 9.90\% & 11.70\% &  $4GB$ \bigstrut\\
\cline{2-11}    \multicolumn{1}{|c|}{} &  $4GAB$ & \textbf{2.96\%} & \textbf{3.85\%} & \textbf{3.11\%} & \textbf{3.71\%} & \textbf{3.96\%} & \textbf{2.97\%} & \textbf{3.41\%} & \textbf{3.40\%} &  $4GAB$ \bigstrut\\
    \hline
\multicolumn{1}{|c|}{\multirow{4}[8]{*}{\CThree}}      & BPW & 0.3731 & 0.3331 & 0.2729 & 0.4234 & 0.3623 & 0.3537 & 0.3454 & 0.3539 & BPW \bigstrut\\
\cline{2-11}    \multicolumn{1}{|c|}{} &  $4GA$ & 0.17 & \textbf{0.1545} & \textbf{0.096} & 0.2114 & \textbf{0.1772} & 0.1482 & 0.1656 & 0.1625 &  $4GA$ \bigstrut\\
\cline{2-11}    \multicolumn{1}{|c|}{} &  $4GB$ & \textbf{0.1632} & 0.1585 & 0.0969 & \textbf{0.2083} & 0.1841 & \textbf{0.1423} & \textbf{0.1593} & \textbf{0.1609} &  $4GB$ \bigstrut\\
\cline{2-11}    \multicolumn{1}{|c|}{}     &  $4GAB$ & 0.1784 & 0.1630 & 0.1116 & 0.2170 & 0.1966 & 0.1516 & 0.1678 & 0.1710 &  $4GAB$ \bigstrut\\
    \hline
    \multicolumn{2}{|c|}{\# options} & 507 & 493 & 515 & 485 & 303 & 404 & 293 & 1000 & \multicolumn{1}{r}{} \bigstrut\\
\cline{1-10}    \end{tabular}%
  \label{tab:groupBOverall}%
\end{table}
\end{landscape}%

\begin{landscape}
\begin{table}[htbp]
\footnotesize
  \centering
 \caption{Comparison: Delta-hedging performances.\\ \footnotesize
 This table contains the summary of the Delta-hedging performances of three methods.  BPW stands for the method in \cite{BorovkovaPermanaWeide2007} and $4GA$ and $4GB$ are the methods summarized in Table \ref{momentMatching}. The measures of error considered are: \CFour-- the volatility of Delta, \CFive-- the $MSE$ on the hedging performance along the life time of the contract, \CSix$\,$ and \CSeven-- the numbers of sub-hedging and super-hedging respectively, finally \CEight, \CNine$\,$ and \CTen-- respectively the average error on sub-hedging portfolios, super-hedging portfolios and all portfolios.}
    \begin{tabular}{|r|r|r|r|r|r|r|r|r|r|}
\cline{3-9}    \multicolumn{1}{r}{} &  & \multicolumn{1}{c|}{Basket 1} & \multicolumn{1}{c|}{Basket 2} & \multicolumn{1}{c|}{Basket $3^*$} & \multicolumn{1}{c|}{Basket $4^*$} & \multicolumn{1}{c|}{Basket $5^*$} & \multicolumn{1}{c|}{Basket $6^*$} & \multicolumn{1}{c|}{\multirow{2}[4]{*}{Total}} & \multicolumn{1}{r}{} \bigstrut\\
\cline{3-8}    \multicolumn{1}{r}{} &  & \multicolumn{1}{c|}{GBM} & \multicolumn{1}{c|}{GBM} & \multicolumn{1}{c|}{Shifted Jump} & \multicolumn{1}{c|}{Shifted Jump } & \multicolumn{1}{c|}{Shifted GBM} & \multicolumn{1}{c|}{Shifted GBM} & \multicolumn{1}{c|}{} & \multicolumn{1}{r}{} \bigstrut\\
    \hline
    \multicolumn{1}{|c|}{\multirow{7}[14]{*}{BPW}} & \multicolumn{1}{c|}{\CFour} & \textbf{0.1959} & 0.4707 & 0.2079 & 0.2332 & 0.2418 & \textbf{0.2045} & 0.259 & \multicolumn{1}{c|}{\CFour} \bigstrut\\
\cline{2-10}    \multicolumn{1}{|c|}{} & \multicolumn{1}{c|}{\CFive} & 1.5117 & 1.5622 & \textbf{1.5836} & \textbf{0.6354} & 2.1108 & 1.5385 & \textbf{1.4904} & \multicolumn{1}{c|}{\CFive} \bigstrut\\
\cline{2-10}    \multicolumn{1}{|c|}{} & \multicolumn{1}{c|}{\CSix} & \textbf{0.6457} & \textbf{0.1549} & 0.7042 & \textbf{0.2145} & 0.707 & \textbf{0.6641} & \textbf{51.51\%} & \multicolumn{1}{c|}{\CSix} \bigstrut\\
\cline{2-10}    \multicolumn{1}{|c|}{} & \multicolumn{1}{c|}{\CSeven} & 0.3543 & 0.8451 & \textbf{0.2958} & 0.7855 & \textbf{0.293} & 0.3359 & 48.49\%& \multicolumn{1}{c|}{\CSeven} \bigstrut\\
\cline{2-10}    \multicolumn{1}{|c|}{} & \multicolumn{1}{c|}{\CEight} & -3.6303 & -8.6137 & -4.0088 & -11.0397 & -6.3002 & \textbf{-3.2808} & -6.1456 & \multicolumn{1}{c|}{\CEight} \bigstrut\\
\cline{2-10}    \multicolumn{1}{|c|}{} & \multicolumn{1}{c|}{\CNine} & \textbf{1.676} & 14.5837 & \textbf{2.1672} & 15.8623 & \textbf{3.0774} & 2.2358 & 6.6004 & \multicolumn{1}{c|}{\CNine} \bigstrut\\
\cline{2-10}    \multicolumn{1}{|c|}{} & \multicolumn{1}{c|}{\CTen} & -1.7504 & 10.9898 & -2.1821 & 10.0914 & -3.5523 & \textbf{-1.428} & 2.0281 & \multicolumn{1}{c|}{\CTen} \bigstrut\\
    \hline
    \multicolumn{1}{|c|}{\multirow{7}[14]{*}{ $4GA$}} & \multicolumn{1}{c|}{\CFour} & 0.1984 & \textbf{0.2069} & \textbf{0.1986} & \textbf{0.1884} & \textbf{0.2395} & 0.2389 & \textbf{0.2118} & \multicolumn{1}{c|}{\CFour} \bigstrut\\
\cline{2-10}    \multicolumn{1}{|c|}{} & \multicolumn{1}{c|}{\CFive} & 1.502 & 1.335 & 1.6066 & 0.9208 & \textbf{2.0806} & \textbf{1.5351} & 1.4967 & \multicolumn{1}{c|}{\CFive} \bigstrut\\
\cline{2-10}    \multicolumn{1}{|c|}{} & \multicolumn{1}{c|}{\CSix} & 0.6511 & 0.6652 & \textbf{0.688} & 0.364 & \textbf{0.703} & 0.6934 & 62.74\% & \multicolumn{1}{c|}{\CSix} \bigstrut\\
\cline{2-10}    \multicolumn{1}{|c|}{} & \multicolumn{1}{c|}{\CSeven} & \textbf{0.3489} & \textbf{0.3348} & 0.312 & \textbf{0.636} & 0.297 & \textbf{0.3066} & \textbf{37.25\%} & \multicolumn{1}{c|}{\CSeven} \bigstrut\\
\cline{2-10}    \multicolumn{1}{|c|}{} & \multicolumn{1}{c|}{\CEight} & -3.5411 & -3.9376 & \textbf{-3.8939} & \textbf{-5.6858} & \textbf{-5.3825} & -3.7258 & \textbf{-4.3611} & \multicolumn{1}{c|}{\CEight} \bigstrut\\
\cline{2-10}    \multicolumn{1}{|c|}{} & \multicolumn{1}{c|}{\CNine} & 1.6796 & \textbf{1.7109} & 2.3771 & \textbf{1.4358} & 3.1234 & \textbf{2.1072} & \textbf{2.0723} & \multicolumn{1}{c|}{\CNine} \bigstrut\\
\cline{2-10}    \multicolumn{1}{|c|}{} & \multicolumn{1}{c|}{\CTen} & -1.7198 & -2.0466 & -1.9372 & -1.1567 & \textbf{-2.8564} & -1.9373 & \textbf{-1.9423} & \multicolumn{1}{c|}{\CTen} \bigstrut\\
    \hline
    \multicolumn{1}{|c|}{\multirow{7}[14]{*}{ $4GB$}} & \multicolumn{1}{c|}{\CFour} & 0.1983 & 0.207 & \textbf{0.1986} & 0.1886 & 0.2429 & 0.2389 & 0.2124 & \multicolumn{1}{c|}{\CFour} \bigstrut\\
\cline{2-10}    \multicolumn{1}{|c|}{} & \multicolumn{1}{c|}{\CFive} & \textbf{1.5007} & \textbf{1.3327} & 1.6066 & 0.9182 & 2.0832 & \textbf{1.5351} & 1.4961 & \multicolumn{1}{c|}{\CFive} \bigstrut\\
\cline{2-10}    \multicolumn{1}{|c|}{} & \multicolumn{1}{c|}{\CSix} & 0.6511 & 0.662 & \textbf{0.688} & 0.3608 & 0.7057 & 0.6934 & 62.68\% & \multicolumn{1}{c|}{\CSix} \bigstrut\\
\cline{2-10}    \multicolumn{1}{|c|}{} & \multicolumn{1}{c|}{\CSeven} & \textbf{0.3489} & 0.338 & 0.312 & 0.6392 & 0.2943 & \textbf{0.3066} & 37.32\% & \multicolumn{1}{c|}{\CSeven} \bigstrut\\
\cline{2-10}    \multicolumn{1}{|c|}{} & \multicolumn{1}{c|}{\CEight} & \textbf{-3.5273} & \textbf{-3.93} & \textbf{-3.8939} & -5.6936 & -5.5294 & -3.7258 & -4.3833 & \multicolumn{1}{c|}{\CEight} \bigstrut\\
\cline{2-10}    \multicolumn{1}{|c|}{} & \multicolumn{1}{c|}{\CNine} & 1.6796 & 1.711 & 2.3771 & 1.4564 & 3.1163 & \textbf{2.1072} & 2.0746 & \multicolumn{1}{c|}{\CNine} \bigstrut\\
\cline{2-10}    \multicolumn{1}{|c|}{} & \multicolumn{1}{c|}{\CTen} & \textbf{-1.7108} & \textbf{-2.0232} & \textbf{-1.9372} & \textbf{-1.1232} & -2.9845 & -1.9373 & -1.9527 & \multicolumn{1}{c|}{\CTen} \bigstrut\\
    \hline
    \end{tabular}%
  \label{tab:hedgingResults}%
\end{table}
\end{landscape} 

\begin{appendix}
\section*{Appendices} %\appendixpage
\def\appendixname{}

\section{Propositions Proofs}
\subsection{Proof of Proposition \ref{propositionShift}}
\label{AppendixShif}
\label{proof31}
Define the quantity
\begin{equation}
\Gamma(t)=e^{\left(r-\EJumpQ_i \JIntensityQ_i-\frac{1}{2}\sum_{j=1}^{\numWienerProcess}\coefficientWienerP{2}\right)t+\sum_{j=1}^{\numWienerProcess}\coefficientWienerP{} W_t^{(j)}}  \prod_{l=1}^{\PoissonP_t^{(i)}}{(\JumpP_l^{(i)}+1)}
\end{equation}
and calculate its expectation under the $\ProbQ$-martingale measure.
From equation \eqref{solutionArbitrageFree}, given that the system of equations \eqref{systemArbitrageFree} admits a solution, it follows that $\espQ[\Gamma(t)]=\ee^{rt}$.
Separating the right side of identity~\eqref{assetSolutionShiftedFirst} into two different components
\begin{equation}
\label{stockDivided}
S_t^{(i)}= [S_0^{(i)} \Gamma(t)]+\left[-\sign_i \Gamma(t) \shift_0^{(i)} + \sign_i\shift_t^{(i)}\right]
\end{equation}
and taking the discounted expectation of the quantity in the second parentheses lead to
\begin{eqnarray}
\label{proof2}
\ee^{-rt} \espQ\left[-\sign_i \Gamma(t) \shift_0^{(i)} + \sign_i\shift_t^{(i)}\right]&=&\ee^{-rt} \sign_i \left(-\espQ[\Gamma(t)]\shift_0^{(i)}+\espQ[\shift_t^{(i)}]\right) \nonumber\\
&=& \ee^{-rt}\sign_i \left(-\ee^{rt}\shift_0^{(i)}+\ee^{rt}\shift_0^{(i)}\right)=0
\end{eqnarray}
where we have used the martingale property of $\stocProcess{\shift_t^{(i)}}$. Consequently, the second bracket in~\eqref{stockDivided} does not influence the expectation but only the first  plays a role. Finally, by using \eqref{stockDivided} and \eqref{proof2},
\begin{equation*}
\espQ\left[\ee^{-rt}S_t^{(i)}\right]=\espQ\left[S_0^{(i)}\Gamma(t)\ee^{-rt}\right]=S_0^{(i)}
\end{equation*}
that concludes the proof.
%\end{proof} 

Proposition \ref{propositionShift} can be generalized as follows.
\begin{proposition}\label{generalprop}
Proposition~\ref{propositionShift} still holds for any adapted process $\stocProcess{\shift_t^{(i)}}$ such that $\espQ[\ee^{-rt} \shift_t^{(i)}]=\shift_0^{(i)}$. In that case, the solution of the SDEs~\eqref{assedDinamicsShifted} is:
\begin{equation}
%\label{assetSolutionShiftedTheo}
S_t^{(i)}=\left(S_0^{(i)}-b_i \shift_0^{(i)}\right)e^{\left(r-\beta_i \lambda_i-\frac{1}{2}\sum_{j=1}^{\numWienerProcess}\gamma_{i,j}^2\right)t+\sum_{j=1}^{\numWienerProcess}\gamma_{i,j} W_t^{(j)}}  \prod_{l=1}^{N_t^{(i)}}{(\JumpP_l^{(i)}+1)}+b_i\shift_t^{(i)}.
\end{equation}
\end{proposition}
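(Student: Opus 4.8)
The plan is to run the proof of Proposition~\ref{propositionShift} essentially unchanged, pinpointing the single place where the assumption $d\shift_t^{(i)}=r\shift_t^{(i)}dt$ was used and substituting for it the weaker normalization $\espQ[\ee^{-rt}\shift_t^{(i)}]=\shift_0^{(i)}$. First I would set $X_t^{(i)}:=\sign_i S_t^{(i)}-\shift_t^{(i)}$, which is precisely the quantity whose dynamics \eqref{assedDinamicsShifted} specifies. Viewed in these terms, \eqref{assedDinamicsShifted} is a jump--diffusion SDE for $X^{(i)}$ of exactly the shape \eqref{assedDinamics} (its coefficients depend on the shift only through $X^{(i)}$ itself), and since the market price of risk equations \eqref{systemArbitrageFreeProposition} for $X^{(i)}$ coincide with \eqref{systemArbitrageFree}, passing to the selected pricing measure $\ProbQ$ recasts \eqref{assedDinamicsShifted} as the $\ProbQ$-dynamics \eqref{assedDinamicsQ} for $X^{(i)}$. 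Applying the closed form \eqref{solutionArbitrageFree} to $X^{(i)}$ then yields $X_t^{(i)}=X_0^{(i)}\,\Gamma(t)$ with $X_0^{(i)}=\sign_i S_0^{(i)}-\shift_0^{(i)}$ and
\[
\Gamma(t)=\ee^{\left(r-\EJumpQ_i \JIntensityQ_i-\frac{1}{2}\sum_{j=1}^{\numWienerProcess}\coefficientWienerP{2}\right)t+\sum_{j=1}^{\numWienerProcess}\coefficientWienerP{} \WienerQ_t^{(j)}}\prod_{l=1}^{\PoissonP_t^{(i)}}{(\JumpP_l^{(i)}+1)},
\]
the same factor appearing in the proof of Proposition~\ref{propositionShift}. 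Inverting the substitution, $S_t^{(i)}=\sign_i\bigl(X_t^{(i)}+\shift_t^{(i)}\bigr)$, and using $\sign_i^2=1$ recovers the displayed formula; this portion is routine once \eqref{solutionArbitrageFree} is in hand.

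It then remains to verify that Proposition~\ref{propositionShift} ``still holds'', i.e.\ that $\espQ[\ee^{-rt}S_t^{(i)}]=S_0^{(i)}$. Here I would write $\ee^{-rt}S_t^{(i)}=\sign_i X_0^{(i)}\,\ee^{-rt}\Gamma(t)+\sign_i\,\ee^{-rt}\shift_t^{(i)}$ and take $\espQ$ of the two pieces separately. For the first, $\espQ[\Gamma(t)]=\ee^{rt}$ exactly as in the proof of Proposition~\ref{propositionShift} --- this is the step that invokes solvability of \eqref{systemArbitrageFreeProposition} --- so its contribution is $\sign_i X_0^{(i)}=S_0^{(i)}-\sign_i\shift_0^{(i)}$; for the second, the hypothesis $\espQ[\ee^{-rt}\shift_t^{(i)}]=\shift_0^{(i)}$ contributes $\sign_i\shift_0^{(i)}$. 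The two add up to $S_0^{(i)}$, which is the claim.

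The one point I would flag rather than gloss over --- and the only place where the generality of the shift process matters --- is the opening identification: that the jump term $\bigl(\sign_i S_{t^-}^{(i)}-\shift_t^{(i)}\bigr)d\CompoundPoissonP_t^{(i)}$ in \eqref{assedDinamicsShifted} really equals $X_{t^-}^{(i)}d\CompoundPoissonP_t^{(i)}$, so that \eqref{assedDinamicsShifted} is a bona fide instance of \eqref{assedDinamics}. This holds provided $\stocProcess{\shift_t^{(i)}}$ shares no jump times with $\stocProcess{\PoissonP_t^{(i)}}$ --- in particular whenever $\shift^{(i)}$ is continuous, which covers every application in the paper --- and it is under this reading that formula \eqref{solutionArbitrageFree} is being applied. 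Granting that, the rest of the argument is a line-by-line transcription of the proof of Proposition~\ref{propositionShift} with $\shift_0^{(i)}\ee^{rt}$ replaced throughout by $\shift_t^{(i)}$.
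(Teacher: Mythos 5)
Your proof is correct and follows essentially the same route as the paper, whose own justification is simply that the argument for Proposition~\ref{propositionShift} carries over verbatim because only the property $\espQ[\ee^{-rt}\shift_t^{(i)}]=\shift_0^{(i)}$ of the shift was used there; you substitute that normalization at exactly the right spot and reach the same conclusion. If anything you are more careful than the paper, both in deriving the closed form via the substitution $X_t^{(i)}=\sign_i S_t^{(i)}-\shift_t^{(i)}$ and in flagging that the jump term requires $\shift^{(i)}$ not to jump simultaneously with $\PoissonP^{(i)}$ --- a caveat the paper silently omits.
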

\begin{proof}
The proof is identical to the one for  Proposition \ref{propositionShift} because we used there only the martingale property of the shift.
\end{proof}
\subsection{Proof of Proposition \ref{prop:moments}}
\label{proof41}
%\begin{proof}
Formula \eqref{moments} is derived by exponentiation of formula~\eqref{shiftedBasket} where the moment generation function of $\sigma_i V_t^{(i)}  +\sum_{l=1}^{\PoissonP _t^{(i)}}{\log{(\JumpP_l^{(i)}+1)}}$ in \eqref{MultivariateMerton spec mgf} is calculated by conditioning with respect to $N_t$.
%\end{proof} 

\subsection{Proof of Proposition \ref{propositionPrice}}
\label{proof42}
The proposition can be proved by considering the second equality in \eqref{pricingFormula}:
\begin{equation}\label{pricingFormulaAppendix}
c_0=\ee^{-rT}\espQ[(\ShiftedBasket_T-\ShiftedStrike)^+]\approx \ee^{-rT}\int_{\lowerLimit}^{\upperLimit}{\left[\ShiftedBasket_0\ee^{rT}(J(z)+\parameterA)-\ShiftedStrike\right]\phi(z)dz} 
\end{equation}
where, for $\ShiftedBasket_0>0$, $l_1=\zetaTilde$ and $l_2=+\infty$ and, for $\ShiftedBasket_0<0$, $l_1=-\infty$ and $l_2=\zetaTilde$. For the last integral in \eqref{pricingFormulaAppendix}, Appendix \ref{AppendixPriceFunction}  is useful.

\subsection{Proof of Proposition \ref{proposition43}}
\label{proof43}
%\begin{proof}
The calculation of the hedging parameter can be achieved by direct differentiation of the approximate pricing formula \eqref{pricingFormulaAppendix} by applying Leibniz' rule. The results in Appendix \ref{AppendixA} are useful here.
%\end{proof}
\section{Computational Tools} %Appendix 3
\label{AppendixA}

\subsection{Tools for the pricing formula (Proposition \ref{propositionPrice})}
Hermite polynomials satisfy the recursive relation
\[
H_k(z)=z H_{k-1}(z) - H_{k-1}'(z) \quad k=1,2,\ldots 
\]
with $H_0(z)=1$.
 
Hence, for $\ShiftedBasket_0>0$
\[
\int_{\zetaTilde}^{+\infty} H_0(z)\phi(z) dz= \Phi(-\zetaTilde)
\]
and for $k\geq 1$
\[
\int_{\zetaTilde}^{+\infty} H_k(z)\phi(z) dz=\int_{\zetaTilde}^{+\infty}z H_{k-1}(z)\phi(z) dz - \int_{\zetaTilde}^{+\infty}H_{k-1}'(z)\phi(z) dz.
\]
Solving the second integral by parts and using $\phi'(z)=-z\phi(z)$,
\begin{eqnarray}
\int_{\zetaTilde}^{+\infty} H_k(z)\phi(z) dz&=&
\int_{\zetaTilde}^{+\infty}z H_{k-1}(z)\phi(z) dz +\nonumber\\ &&- \left[\left.-H_{k-1}(z)\phi(z)\right|_{\zetaTilde}^{+\infty}+\int_{\zetaTilde}^{+\infty}z H_{k-1}(z)\phi(z) dz\right]=\nonumber \\
&=& H_{k-1}({\zetaTilde})\phi({\zetaTilde}) \nonumber
\end{eqnarray}
and consequently,
\begin{eqnarray}\label{int1}
\int_{\zetaTilde}^{+\infty}{J(z)\phi(z)dz}&=& \functionF(\zetaTilde)+\coeffHermite{0} \Phi(-\zetaTilde).
\end{eqnarray}
Given the orthogonality feature of the Hermite polynomials,
\begin{equation}
\label{orthogonality}
\int_{-\infty}^\infty H_k(z)\phi(z)dx\,=\,0{\rm\ for\ }n\geq 1,
\end{equation}
for $\ShiftedBasket_0<0$
\begin{eqnarray}
\int_{-\infty}^{\zetaTilde} H_k(z)\phi(z) dz&=&-H_{k-1}({\zetaTilde})\phi({\zetaTilde}) \nonumber
\end{eqnarray}

and
\begin{eqnarray}\label{int2}
\int_{-\infty}^{\zetaTilde}{J(z)\phi(z)dz}&=& -\functionF(\zetaTilde)+\coeffHermite{0} \Phi(\zetaTilde).
\end{eqnarray}
\label{AppendixPriceFunction}

\subsection{Moments of the considered variables}
\label{momentY}
\label{momentNormalizedBasket}
The $k$-th moment of
\begin{equation*}
J(Z)=\sum_{k=0}^{m-1}{\coeffHermite{k} H_k(Z)}
\end{equation*}
can be calculated as:
\begin{equation}
\label{momentHermiteApprox}
\espQ[J^k]=\sum_{i_1=0}^{m}\ldots\sum_{i_k=0}^{m}{\coeffHermite{i_1}\ldots\coeffHermite{i_k}\esp[H_{i_1}(Z)\ldots H_{i_k}(Z)]}.
\end{equation}
Applying the property that the Hermite polynomials are orthogonal with respect to the standard normal probability density function (see equation \eqref{orthogonality}), formula~\eqref{momentHermiteApprox}  becomes $\espQ[J]=\coeffHermite{0}$, $\espQ[J^2]=\sum_{i=0}^{m}{i!\coeffHermite{i}^2 }$ and $\espQ[J^3]=\coeffHermite{0}^3 + (3 \coeffHermite{1}^2 + 6 \coeffHermite{2}^2 + 18 \coeffHermite{3}^2 + 72 \coeffHermite{4}^2 + 360 \coeffHermite{5}^2) \coeffHermite{0}+ 6 \coeffHermite{1}^2 \coeffHermite{2} + 36 \coeffHermite{1} \coeffHermite{2} \coeffHermite{3} + 144 \coeffHermite{1} \coeffHermite{3} \coeffHermite{4} + 720 \coeffHermite{1} \coeffHermite{4} \coeffHermite{5} + 8 \coeffHermite{2}^3 + 72 \coeffHermite{2}^2 \coeffHermite{4} + 108 \coeffHermite{2} \coeffHermite{3}^2 + 720 \coeffHermite{2} \coeffHermite{3} \coeffHermite{5} + 576 \coeffHermite{2} \coeffHermite{4}^2 + 3600 \coeffHermite{2} \coeffHermite{5}^2 + 648 \coeffHermite{3}^2 \coeffHermite{4}+ 8640 \coeffHermite{3} \coeffHermite{4} \coeffHermite{5}+ 1728 \coeffHermite{4}^3 + 43200 \coeffHermite{4} \coeffHermite{5}^2$.\\
For $k>3$, formula \eqref{momentHermiteApprox} can be evaluated in a closed form as a weighted sum of the moments of the standard normal variable knowing that the product between two Hermite polynomials is still a (non-Hermitian) polynomial and  that the expected value is a linear operator. The formulae are very long and are not given here for lack of space but they can be obtained upon request from the authors.

The $k$-th moment of the normalized basket for $mGA$ in Table~\ref{momentMatching} is given by:
\begin{equation}
\espQ\left[\left(\frac{B_T}{ \ShiftedBasket_0\ee^{rT}}\right)^k\right]=\frac{\esp[\ShiftedBasket^{k}_T]}{\ShiftedBasket_0^k \ee^{rkT}}
\end{equation}
and therefore the $k$-th moment of the normalized basket for $mGB$ is given by:
\begin{equation}
\espQ\left[\left(\frac{B_T}{ \ShiftedBasket_0\ee^{rT}}-1\right)^k\right]=\sum_{i=0}^{k}{\binom{k}{i}\frac{(-1)^i}{(\ShiftedBasket_0\ee^{rT})^{k-i}}\espQ[\ShiftedBasket^{k-i}_T]}
\end{equation}

\subsection{Hedging Parameters Calculations}
\label{appendix4}
For the hedging formulae, equations \eqref{int1} and \eqref{int2} are useful. 
These formulae can also be applied for the calculation of
\begin{equation*}
\int_{\lowerLimit}^{\upperLimit}{\frac{\partial J(z)}{\partial u}\phi(z)dz}
\end{equation*}
because
\begin{equation*}
\frac{\partial J(z)}{\partial u}=\sum_{k=0}^m{\frac{\partial \coeffHermite{k}}{\partial u}H_k(z)}
\end{equation*}
and, consequently, $J(Z)$ and $\frac{\partial J(z)}{\partial u}$ have the same structure but $\frac{\partial \coeffHermite{k}}{\partial u}$ takes the place of $\coeffHermite{k}$.

Finally, the derivatives $\frac{\partial \coeffHermite{k}}{\partial x}$ are calculated as below\footnote{This method is also used in \cite{BorovkovaPermanaWeide2007}.}. We start from the system:
\begin{numcases}{}
\espQ[J]=\espQ[X_T] \nonumber\\
\espQ[J^2]=\espQ[X_T^2]\nonumber\\
\quad \quad\cdots \nonumber\\
\espQ[J^m]=\espQ[X_T^m] \nonumber
\end{numcases}
where $X_T=\frac{\ShiftedBasket_T}{\ShiftedBasket_0 \ee^{rT}}+\parameterA$ and we differentiate left and right side of each equation with respect to $u$.

We are interested in the solution of the system when the derivatives are calculated in correspondence of the current status i.e. when the $\coeffHermite{i}$s are $\bar{\coeffHermite{0}},\bar{\coeffHermite{1}},\cdots,\bar{\coeffHermite{m}}$ and the parameter $u$ is $\bar{u}$. So we solve:
\begin{numcases}{}
\left.\frac{\partial\espQ[J]}{\partial u}\right|_{\bar{\coeffHermite{0}},\bar{\coeffHermite{1}},\cdots,\bar{\coeffHermite{m}}}=\left.\frac{\partial\espQ[X_T]}{\partial u}\right|_{\bar{u}} \nonumber\\
\left.\frac{\partial\espQ[J^2]}{\partial u}\right|_{\bar{\coeffHermite{0}},\bar{\coeffHermite{1}},\cdots,\bar{\coeffHermite{m}}}=\left.\frac{\partial\espQ[X_T^2]}{\partial u}\right|_{\bar{u}} \nonumber\\
\label{equ:system}\\
\quad \quad \quad \quad \cdots \quad \quad \quad \quad \quad \quad  \cdots\nonumber\\
\nonumber \\
\left.\frac{\partial\espQ[J^m]}{\partial u}\right|_{\bar{\coeffHermite{0}},\bar{\coeffHermite{1}},\cdots,\bar{\coeffHermite{m}}}=\left.\frac{\partial\espQ[X_T^m]}{\partial u}\right|_{\bar{u}} \nonumber
\end{numcases}
a linear system in the first derivative of the $\coeffHermite{i}$ with respect of $u$ calculate in correspondence of $\bar{\coeffHermite{0}},\bar{\coeffHermite{1}},\cdots,\bar{\coeffHermite{m}}$.
As before, the integrals can be evaluated in a closed-form. The quantities on the right of the equations \eqref{equ:system} are calculated differentiating the formula of the moments.
In particular, for the $\Delta$, the following relations are relevant:
\begin{equation*}
\frac{\partial \esp[X_T^k]}{\partial B_0}=\frac{\partial \esp[X_T^k]}{\partial a_1}\frac{\partial a_1}{\partial B_0}=\frac{\partial \esp[X_T^k]}{\partial a_1}\frac{1}{S_1}
\end{equation*}
and
\small
\begin{eqnarray}
\frac{\partial \esp[B_T^k]}{\partial a_1}&=& k a_1  \sum_{i_1=1}^N\cdots\sum_{i_{k-1}=1}^N
\left( a_{ i_1 } (S_0^{(i_1)}-\sign_{i_1}\shift_0^{(i_1)})  \ee^{  (r+\omega_{i_1})t    }\right)\times \cdots\nonumber \\
&\cdots&\times \left( a_{ i_{k-1} } (S_0^{(i_{k-1})}-\sign_{i_{k-1}}\shift_0^{(i_{k-1})})  \ee^{  (r+\omega_{i_{k-1}})t   }\right)
\mgf(\bm{e}_1+\bm{e}_{ i_1 } + \ldots + \bm{e}_{ i_{k-1} })\nonumber
\end{eqnarray} 
\end{appendix}

\end{document}